\documentclass[preprint, 3p, dvipdfmx, sort&compress]{elsarticle} 

\usepackage{amsmath, amssymb, amsthm}
\usepackage{bm, physics}
\usepackage{subcaption}
\captionsetup[subfigure]{labelformat=simple}

\captionsetup[subtable]{labelformat=simple}

\usepackage{booktabs, diagbox}
\usepackage{comment}

\newtheorem{define}{Definition}
\newtheorem{theorem}{Theorem}

\newcommand{\relmiddle}[1][|]{\mathrel{}\middle#1\mathrel{}}
\newcommand{\tlin}{\text{lin}}
\newcommand{\tconv}{\text{conv}}
\newcommand{\tconc}{\text{conc}}


\title{Asymptotically stable matchings and evolutionary dynamics of preference revelation games in marriage problems}

\author[1]{Hidemasa Ishii\corref{cor1}
}
\ead{hidemasaishii1997@g.ecc.u-tokyo.ac.jp}

\author[2]{Nariaki Nishino
}

\cortext[cor1]{Corresponding author.}
\address[1]{Graduate School of Frontier Science, The University of Tokyo, Kashiwa-shi, Chiba 277-8561, Japan}
\address[2]{School of Engineering, The University of Tokyo, Bunkyo-ku, Tokyo 113-8656, Japan}

\begin{document}
    \begin{abstract}
        The literature on centralized matching markets often assumes
        that a true preference of each player is known to herself and fixed,
        but empirical evidence casts doubt on its plausibility.
        To circumvent the problem, we consider evolutionary dynamics of preference revelation games 
        in marriage problems.
        We formulate the asymptotic stability of a matching, indicating the dynamical robustness 
        against sufficiently small changes in players' preference reporting strategies,
        and show that asymptotically stable matchings are stable when they exist.
        The simulation results of replicator dynamics are presented to demonstrate the asymptotic stability.
        We contribute a practical insight for market designers that a stable matching may be realized by introducing a learning period in which participants find appropriate reporting strategies through trial and error.
        We also open doors to a novel area of research by demonstrating ways to employ evolutionary game theory in studies on centralized markets.
    \end{abstract}

    \begin{keyword}
        asymptotic stability \sep marriage problem \sep matching theory \sep evolutionary game theory
        \JEL C73 \sep C78 \sep D47
    \end{keyword}
    \maketitle


\section{Introduction} \label{sec:intro}
The marriage problem is a simple and basic model of two-sided matching markets 
which involves one-to-one matchings.
Suppose that there are two sets of players $M$ and $W$, often referred to as men and women respectively.
Each player has a preference relation defined over the other set and remaining single.
A preference profile $P$ is a vector which consists of the preference relations of all players.
A matching is said to be stable, if no player is worse off than to remain unmatched, 
and if no pair of a man and a woman prefer each other to their current partners.
Gale and Shapley \cite{gale1962} formulated the marriage problem and 
showed that the set of stable matchings is always nonempty.
In the proof, they presented the so-called deferred acceptance (DA) algorithm, which always yields a stable matching.
They also showed the existence of the $M$-optimal stable matching: 
i.e. a stable matching that is preferred to all the other stable matchings by all men.
Similarly, there exists the $W$-optimal stable matching.

In practice, a matching has to be determined based on the reported preferences,
because a matchmaker can never know the true preferences of players.
A matching algorithm induces the preference revelation game, 
a noncooperative game where players submit their preferences strategically.
We refer to a preference profile submitted by the players as a report profile,
in order to distinguish it from the true preference profile.
A number of studies have 
investigated incentives of players to report their true preferences \cite{roth1982,dubins1981,gale1985a,immorlica2005,kojima2009} and 
characterized the set of matchings resulting from the strategic reporting behavior \cite{roth1984,ma1995,alcalde1996,sonmez1997}.
These studies considered centralized matching markets where participants submit their preferences to a clearinghouse based on which an algorithm generates the matching.
Another line of research concerns decentralized markets and has analyzed dynamic processes in which matchings are updated in a pairwise manner \cite{roth1990,jackson2002,newton2015,bilancini2020}.

The literature on centralized markets often assumes that a true preference of each player is known to herself and fixed throughout the matching process.
However several empirical works on school choice situations cast doubt on its plausibility.
Dwenger et al. \cite{dwenger2018} concluded that participants in the university admissions centralized market in Germany chose to randomize their decisions.
The preliminary results by Narita \cite{narita2018} and Grenet et al. \cite{grenet2021} both suggest that preferences of players changed in time and learning was the prominent driver of those changes.
The current paper explores the possibility of circumventing this problem by adopting an evolutionary uuapproach.
The standard setting of game theory is that the game is played exactly once by fully rational players.
The contrasting setting of evolutionary game theory \cite{weibull1997,newton2018} is that players are not necessarily rational and their behavior changes in time through such processes as trial-and-error and social learning.
Evolutionary dynamics of a game refers to such time evolution of the strategy profile.
Since players learn their own and others' preferences by experience, they need not be aware of their preferences when the process begins, unlike much of the literature on centralized markets.

In this paper, we theoretically investigate what kind of matchings will emerge through evolutionary dynamics of preference revelation games in marriage problems.
We consider a class of continuous-time deterministic evolutionary dynamics which contains the famous replicator dynamics.
We define the asymptotic stability of a matching, an analogue to the asymptotic stability of a fixed point in dynamical systems.
When a matching is asymptotically stable, it is likely to be obtained at the equilibria of evolutionary dynamics.
Furthermore, it turns out that an asymptotically stable matching may not exist, but when it does, it is also stable.
The simulation results of the replicator dynamics are presented to demonstrate the dynamical robustness of an asymptotically stable matching against perturbation in the strategy profile.
The asymptotic stability of a matching indicates the matching will be restored after another matching is realized due to sufficiently small changes in players' reporting strategies.
When a fixed point of a dynamical system is asymptotically stable, it intuitively means that sufficiently small perturbation results in a movement that goes back to the fixed point\footnote{%
Let $x^*$ be a fixed point of a system $\dot{x} = f(x)$.
$x^*$ is Lyapunov stable if, for each $\epsilon > 0$, there exists a $\delta > 0$ such that if $\norm{x(0) - x^*} < \delta$ then $\norm{x(t) - x^*} < \epsilon$ for all $t \geq 0$.
$x^*$ is attracting if there exists $\delta > 0$ such that if $\norm{x(0) - x^*} < \delta$ then $\lim_{t \to \infty} x(t) = x^*$.
$x^*$ is asymptotically stable if it is both Lyapunov stable and attracting \cite{strogatz2015}.
}.
Our asymptotic stability of a matching is similar to the asymptotic stability in dynamical systems, but differs in that a focal state and perturbation are not in the same space.
In matching problems, a focal state, a matching, is a point in the matching space, which is the set of all possible matchings.
By contrast, perturbation corresponds to a change in the reporting strategy of a player, and players' strategies are represented by a point in the strategy space.

This research contributes to matching theory in practical and theoretical manner.
Firstly, we formulated the asymptotic stability of a matching, which characterize the matchings that emerge through evolutionary dynamics.
An asymptotically stable matching is likely to be realized when players experiment or make mistakes occasionary.
Furthermore we found that the asymptotic stability of a matching implies its stability.
Thus the current paper suggests a novel practical strategy for market designers to implement stable matchings, 
which is to create a learning phase in which participants can try submitting various preferences and are informed of the would-be matchings many times over.
Players would find reporting strategies that appropriately represent their true preferences during this learning phase.
Accordingly an asymptotically stable and hence stable matching is realized.
We emphasize that our argument does not neccesitate players' initial awareness of their true preferences, unlike the literature.
Secondly, we demonstrate how to employ evolutionary game theory in studying centralized matching markets, thereby opening doors to a novel area of research.
Many directions can be taken: 
analytical research with machinery of evolutionary game theory, numerical analysis of evolutionary dynamics, and empirical work informed by such theoretical studies are all feasible.

The rest of the paper is organized as follows.
Section \ref{sec:pre} is the preliminaries,
where formulation of marriage problems, preference revelation games, and some concepts from evolutionary game theory are presented.
In the following section, we formulate the asymptotic stability of a matching.
In section \ref{sec:sim}, we consider two instances of marriage problems and present simulation results of the replicator dynamics of preference revelation games.
Section \ref{sec:disc} is the discussion, and the concluding remark is in the last section.

\section{Preliminaries} \label{sec:pre}
\subsection{Marriage problem and the preference revelation game} \label{ssec:pre-mp}
An instance of the marriage problem is denoted by a tuple $(M, W, P)$, 
where $M$ and $W$ are the sets of players on each side, and $P$ is the preference profile.
A one-to-one matching is a mapping $\mu: M \cup W \to M \cup W$.
Single players are considered to be paired with themselves.
Thus $\mu(m) \in W \cup \{m\}$ for each $m \in M$, 
$\mu(w) \in M \cup \{w\}$ for each $w \in W$, 
and if $\mu(m) = w$ then $\mu(w) = m$.
A matching $\mu$ is said to be individually rational 
if no player prefers being single to the current partner.
We say that a pair $(m, w)$ blocks a matching $\mu$
when $m$ prefers $w$ to $\mu(m)$ and $w$ prefers $m$ to $\mu(w)$.
$\mu$ is said to be stable if it is individually rational and has no blocking pair.

A matching algorithm induces the preference revelation game.
A preference revelation game is characterized by a triplet $(I, S, \pi)$, where $I$ is the set of players, $S$ is the pure strategy space, and $\pi$ is the tuple of payoff functions.
Firstly, $I = M \cup W$ in standard cases, but $I = W$ when all men are assumed to play their dominant strategies and report truthfully \cite{dubins1981, roth1982}.
As for the pure strategy space, $S^i$, the set of pure strategies for player $i \in I$, consists of $i$'s all possible preferences.
$S = \prod_{i \in I} S^i$ is the pure strategy space of the game.
Denoting the set of all possible preferences of male and female players by $S^M$ and $S^W$ respectively, 
$S^i = S^M$ for all $i \in M$ and $S^i = S^W$ for all $i \in W$.
Furthermore, a simplex 
$\Delta^i = \qty{\qty(x^i_h)_{h \in S^i} \in \mathbb{R}^{|S^i|} \relmiddle x^i_h \geq 0, \, \sum_{h \in S^i} x^i_h = 1} \subset \mathbb{R}^{|S^i|}$
denotes the mixed strategy space for player $i$, and 
$\Theta = \prod_{i \in I} \Delta^i$ 
is the mixed strategy space of the game.
Lastly, the payoff function, which maps a strategy profile to a real-valued payoff, is a composition of two mappings.
The first mapping, a matching algorithm, describes the correspondence between a strategy profile, i.e. a report profile, and the resulting matching.
The second mapping, which we call a utility function, maps each matching to a payoff value.
Formally, let $\mathcal{M}$ denote the matching space, which is the set of all possible matchings.
When a matching algorithm in use always yields the same matching from a given report profile,
the algorithm $\Gamma: S \to \mathcal{M}$ is a mapping from the pure strategy space to the matching space .
In this study, we assume that players' preferences are strict and a $M$-optimal stable algorithm is used, so that the above requirement is met.
A utility function of player $i$, $u^i: \mathcal{M} \to \mathbb{R}$, maps the resulting matching to the player's payoff.
In the current paper, we write $u^i = (u^i_1, \dots, u^i_n)$ to describe a utility function where the payoff of $i$ when matched with $j$-th preferable player is $u^i_j$.
Assuming $u^i_1 > \dots > u^i_n$, $u^i$ is considered as $i$'s cardinal preference.
Player $i$'s payoff function of the game $\pi^i: S \to \mathbb{R}$ is the composition of the matching algorithm $\Gamma$ and the utility function $u^i$: $\pi^i = u^i \circ \Gamma$.
$\pi$ in the triplet $(I, S, \pi)$ is the tuple of payoff functions: $\pi = (\pi^i)_{i \in I}$.

Our formulation of preference revelation game has two characteristics.
First, players are to report their entire preference lists only once,
so that evolutionary game theory on normal form games is applicable.
If players are to reveal their preferences one by one, the game becomes an extensive-form game.
The analysis of such situations seems possible with evolutionary game theory on extensive-form games, 
but it is beyond the scope of this paper.
Second, we consider cardinal preferences, turning preference relations into numerical payoffs with a utility function $u^i$.
This enables us to formulate and solve systems of differential equations which represent evolutionary selection dynamics.
When ordinal preferences are concerned, the payoff is determined solely by the rank of the partner in the preference list.
With cardinal preferences, in contrast, payoffs may vary among players who are matched to their $n$-th preferable partners.
As the first attempt to study evolutionary selection dynamics of the preference revelation game, we would like to have as similar conditions as possible to existing studies, many of which adopted ordinal preferences.
Therefore, we use identical $u^i$ for all players \footnote{%
Assigning different $u^i$ for each player results in different selection pressures among players.
Although such modificatoins lead to changes in specific trajectories in the strategy space, we expect that qualitative features of dynamics such as stabilities of fixed points would not be affected.
}
, which makes payoff comparison among players pointless.

\subsection{Evolutionary game dynamics} \label{ssec:pre-egt}

Replicator dynamics (RD) is one of the most significant evolutionary dynamics.
A replicator corresponds to a pure strategy, and the payoff it gains is considered as its fitness.
RD models the Darwinian natural selection, where strategies with higher payoffs prosper 
while ones with lower payoffs decline.
This model was introduced by Taylor and Jonker \cite{taylor1978} in the game theoretic context, and 
later named Replicator Dynamics by Schuster and Sigmund \cite{schuster1983}.
Researchers have actively studied RD thereafter, and 
it has also played a great role in analyzing human and social behavior \cite{cressman2014}.
Let $d = |I|$ be the number of players in the preference revelation game.
Suppose there are $d$ large and finite populations of replicators, 
each of which corresponds to a player in the game.
For instance, 
all replicators in the population corresponding to $m \in M$ have the same true preferences,
and the set of pure strategies available to them is $S^M$.
Choosing one replicator from each population randomly, 
we have $d$ replicators, each of which plays preference revelation game as the corresponding player.
Replicators earning payoffs higher than the average increase their share within their populations, 
while ones with payoffs lower than the average decline.
The dynamics of natural selection can be modeled by the repetition of this process.

Suppose that each replicator in a population of player $i \in I$ is programmed to use a fixed pure strategy $h \in S^i$.
Let $x^i_h \in [0, 1]$ denote the population share of replicators 
which always use pure strategy $h$ and whose roles are player $i$.
Then a vector $x^i = (x^i_h)_{h \in S^i}$ indicates the state of the player $i$ population.
It is formally equivalent to a mixed strategy of player $i$: i.e. $x^i \in \Delta_i$.
Let $f^i_h$ be the average payoff of the replicators of player $i$ with pure strategy $h$, 
which is equivalent to the expected payoff of $h$.
$\phi^i = \sum_{h \in S^i} x^i_h f^i_h$ is the average payoff of all player $i$ replicators, 
or the expected payoff of mixed strategy $x^i$.
Replicator dynamics of a game with $d$ player is formulated as follows:
\begin{equation} \label{eq:RD}
    \dot{x^i_h} = x^i_h \qty( f^i_h - \phi^i ) \quad \forall i \in I,
\end{equation}
where $\dot{x} = \dv{x}{t}$ and
\begin{equation}
    \sum_{h \in S^i} x^i_h = 1, \quad
    f^i_h = \sum_{s_1, \dots, s_{d-1}} x^{1}_{s_1} \dots x^{i-1}_{s_{i-1}} x^{i+1}_{s_i} \dots x^{d}_{s_{d-1}} \pi(h; s_1 \dots s_{d-1}), \quad
    \phi^i = \sum_{h \in S^i} x^i_h f^i_h.
\end{equation}
The summation in $f^i_h$ is taken for $s_1, \dots, s_{d-1}$, the strategies of players other than $i$.
In words, $x^i_h$ sums up to one for each player, $f^i_h$ is the average payoff for pure strategy $h$ of player $i$, and $\phi^i$ is the average payoff for the player $i$ population.
RD is a system of $\sum_{i \in I} (|S^i| - 1)$ differential equations.

The idea of biological reproduction is the basis for RD.
However, the selection of replicators, namely pure strategies, may well be driven by other mechanisms such as imitations, especially in the social and economic context.
We consider some classes of continuous-time selection dynamics following Weibull \cite{weibull1997}, which are generalizations of RD.

To describe the selection process, the growth-rate function is considered.
Let $g^i_h(x)$ be the growth rate of the population share corresponding to player $i$ and pure strategy $h$ under population state $x$, and
\begin{equation}
    \dot x^i_h = g^i_h(x) x^i_h \quad \forall i \in I, h \in S^i, x \in \Theta.
\end{equation}
The growth-rate function is a vector $g(x) = (g^i(x))_{i \in I}$, 
whose each component is a growth-rate function for player $i$, $g^i(x) = (g^i_h(x))_{h \in S_i}$.
Note that the growth-rate function of RD is $g^i_h(x) = f^i_h - \phi^i$.
Selection dynamics are classified based on the properties of the growth-rate function.
A regular growth-rate function is a Lipschitz continuous\footnote{
A function $f$ is Lipschitz continuous if there exists a constant $k$ 
such that $|f(x_1) - f(x_2)| < k |x_1 - x_2|$ for all $x_1, x_2$. 
} 
function $g: X \to \mathbb{R}^{|S^i|}$ 
with open domain $X \subset \mathbb{R}^{|S^i|}$ containing $\Theta$, 
such that $g^i(x) \cdot x^i = 0$ for all $x \in \Theta$ and $i \in I$
\cite[Definition 5.4]{weibull1997}.
$G$ denotes the set of regular growth-rate functions, 
and the selection dynamics induced by a regular growth-rate function will be called a regular selection dynamics on $\Theta$.
%
A regular growth-rate function $g \in G$ is payoff positive,
if $\operatorname{sgn}\qty[ g^i_h(x) ] = \operatorname{sgn}\qty[ u^i(h, x^{-i}) - u^i(x) ]$ for all $x \in \Theta, i \in I$ and $h \in S_i$,
where $\operatorname{sgn}\qty[z]$ denotes the sign of $z \in \mathbb{R}$
\cite[Definition 5.6]{weibull1997}.
$G^P \subset G$ denotes the set of payoff-positive growth-rate functions,
and the induced selection dynamics will be called a payoff-positive selection dynamics.
Note that RD of equation \eqref{eq:RD} is payoff positive.

Under the payoff-positive selection dynamics, 
a subset of the mixed strategy space $\Theta(H) \subset \Theta$ spanned by a set of pure strategy profiles $H = \prod_{i \in I} H^i \subset S$ is asymptotically stable 
if and only if $H$ is closed under weakly better replies \cite[Theorem 5.3]{weibull1997}.
If a pure strategy $h \in S^i$ earns at least as high payoff against a (mixed) strategy profile $x \in \Theta$ as $x^i$, $h$ is a weakly better reply to $x$.
Let $\alpha^i(H)$ be
\begin{equation} \label{eq:alpha}
\alpha^i(H) = \qty{h \in S^i \relmiddle u^i(h, x^{-i}) \geq u^i(x) \ \text{for some}\ x \in \Theta(H)}.
\end{equation}
$u^i(h, x^{-i})$ denotes the payoff of $h$ for player position $i$ against $x$, 
and $u^i(x)$ is the average payoff for $i$ under $x$.
$H$ is closed under weakly better replies (cuwbr) 
if $\alpha^i(H) \subset H^i$ for all $i \in I$ \cite[Definition 5.9]{weibull1997}.
The property of cuwbr is a generalized concept of a strict Nash equilibrium\footnote{
A strategy profile $x$ is a strict Nash equilibrium if $x$ is the only best response to itself.
} 
to a set.



\section{Asymptotic stability of a matching} \label{sec:asm}
In this section, we formulate asymptotic stability of a matching.
When a matching is asymptotically stable, 
it will be restored after another matching is realized due to perturbation, 
which is a change in the reporting strategy of a player.
Asymptotic stability of a matching differs from the asymptotic stability in dynamical systems in that a focal state and perturbation are not in the same space.
Whereas a focal state is on the matching space, perturbation occurs in the strategy space.
An evolutionary dynamics such as RD of equation \eqref{eq:RD} describes the time evolution of the strategy distributions in the mixed strategy space.
A matching algorithm maps a point in the mixed strategy space to a set of matchings in the matching space.
We utilize this relationship between the strategy space and the matching space, and
define asymptotic stability of a matching through the asymptotic stability of a corresponding area in the mixed strategy space, 
where the conventional definition of the asymptotic stability in dynamical systems can be used.

First, we define a partial preimage of a matching $\mu$, a subset of the pure strategy space that corresponds to $\mu$.
An obvious requirement for a partial preimage of $\mu$ is that any report profile in the set yields $\mu$ by the algorithm in use.
In addition, we require the set to be a cartesian product of the subset of the each player's pure strategy space.
This is because players determine their strategy independently in preference revelation game.
Formally, a partial preimage of a matching is defined as follows.
\begin{define} \label{def:pp}
    Let $\Gamma: S \to \mathcal{M}$ be the matching algorithm in use.
    A subset of the pure strategy space $H(\mu) \subset S$ is called a \emph{partial preimage} of matching $\mu$, when it satisfies the following:
\begin{align}
    \qq{a.} & \Gamma(r) = \mu \quad \forall \, r \in H(\mu) \\
    \qq{b.} & \exists \, \qty(H^i)_{i \in I} \qq{s.t.} 
        H^i \subset S^i \quad \forall \, i \in I, \quad H(\mu) = \prod_{i \in I} H^i.
\end{align}
\end{define}

$H(\mu)$ is "partial" in that it is included in the preimage of $\mu$ under $\Gamma$:
i.e. $H(\mu) \subset \qty{r \in S \relmiddle \Gamma(r) = \mu}$.
There may exist multiple partial preimages for a matching.
In particular, any subset of $H(\mu)$ is a trivial partial preimage of $\mu$.
In later sections, we refer to a nontrivial one that is not included in any other partial preimage of $\mu$ as a non-included partial preimage of $\mu$.

Now, we formulate asymptotic stability of a matching $\mu$ using its partial preimage $H(\mu) = \prod_{i \in I} H^i(\mu)$.
Let $\Theta(H(\mu))$ denote the subspace of the mixed strategy space spanned by $H(\mu)$:
\begin{equation}
    \Theta(H(\mu)) = \prod_{i \in I} \Delta^i(H(\mu)), \quad 
    \Delta^i(H(\mu)) = \qty{\qty(x^i_h)_{h \in H^i(\mu)} \in \mathbb{R}^{\abs{H^i(\mu)}} \relmiddle x^i_h \geq 0, \, \sum_{h \in H^i(\mu)} x^i_h = 1}.
\end{equation}
Note that any mixed report profile in $\Theta(H(\mu))$ yields only $\mu$.
Suppose that a mixed strategy profile $x$ is initially in $\Theta(H(\mu))$, and then sufficiently small perturbation drives the point $x$ out of $\Theta(H(\mu))$.
While $x$ is not in $\Theta(H(\mu))$, the algorithm may now yield matchings other than $\mu$.
If $\Theta(H(\mu))$ is asymptotically stable,
$x$ moves back into $\Theta(H(\mu))$ after some time, yielding only $\mu$.
In short, $\mu$ will be restored after sufficiently small perturbation, 
if $\Theta(H(\mu))$ is asymptotically stable.
As described in the previous section, under payoff-positive selection dynamics, 
$\Theta(H(\mu))$ is asymptotically stable if and only if $H(\mu)$ is closed under weakly better replies (cuwbr).
In this way, we formulate asymptotic stability of a matching.
\begin{define} \label{def:asm}
    A matching $\mu$ is \emph{asymptotically stable} under payoff-positive selection dynamics
    if it has a partial preimage that is closed under weakly better replies.
\end{define}

An asymptotically stable matching does not always exist, but if it does, it is also stable.
\begin{theorem}
When $M$-optimal or $W$-optimal stable algorithms are used, 
a matching that is asymptotically stable under payoff-positive selection dynamics is stable.
\end{theorem}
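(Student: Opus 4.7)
The plan is to exhibit a pure strategy profile $s^* \in H(\mu)$ that is a Nash equilibrium of the preference revelation game in which no player uses a weakly dominated strategy, and then invoke Roth's result \cite{roth1984}, which asserts that any such equilibrium under a stable matching algorithm produces a matching stable with respect to the true preferences $P$. Without loss of generality, I restrict attention to the $M$-optimal stable algorithm; the $W$-optimal case is symmetric.

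To construct $s^*$, first set $s^*_m = P_m$ for every man $m$. Since truth-telling is a dominant strategy for men under the $M$-optimal stable algorithm, $u^m(P_m, s^{-m}) \geq u^m(s) = u^m(\mu)$ for any $s \in H(\mu)$, hence $P_m \in \alpha^m(H(\mu))$, and the cuwbr property of $H(\mu)$ forces $P_m \in H^m(\mu)$; being dominant, $P_m$ is weakly undominated. For every woman $w$, I pick $s^*_w$ to be any maximal element of the finite set $H^w(\mu)$ in the weak-dominance partial order on $S^w$. The key claim is that this $s^*_w$ is weakly undominated in the entire pure strategy space $S^w$: if some $h \in S^w$ weakly dominated $s^*_w$, then $u^w(h, x^{-w}) \geq u^w(s^*_w, x^{-w}) = u^w(\mu)$ for every $x \in \Theta(H(\mu))$, so $h \in \alpha^w(H(\mu))$, and cuwbr forces $h \in H^w(\mu)$, contradicting maximality. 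The profile $s^*$ is a Nash equilibrium because any deviation within $H^i(\mu)$ gives payoff $u^i(\mu)$ while any deviation outside $H^i(\mu)$ gives strictly smaller payoff by the contrapositive of cuwbr; Roth's theorem then yields stability of $\Gamma(s^*) = \mu$ under $P$.

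The main obstacle is the appeal to Roth's theorem. A self-contained proof would need to exhibit explicit profitable single-player deviations whenever $\mu$ is not stable under $P$. Individual rationality is straightforward: if some player strictly prefers being single to $\mu(i)$, the report that lists everyone on the other side as unacceptable produces a strictly profitable deviation, contradicting cuwbr. A blocking pair $(m^*, w^*)$ under $P$ is more delicate: since $P_m \in H^m(\mu)$ one may take all men truthful in $H(\mu)$, forcing $w^*$'s report in $H(\mu)$ to rank $\mu(w^*)$ above $m^*$; the natural deviation is for $w^*$ to truncate her list to those she truly prefers to $\mu(w^*)$, and a careful deferred-acceptance analysis would show that the resulting matching either gives $w^*$ a strictly better partner or reassigns $m^*$ to a still better woman, violating cuwbr after possibly a chain of reassignments. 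The Roth detour via weakly undominated Nash equilibria neatly sidesteps the need to close this chain by hand.
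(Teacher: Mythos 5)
Your argument is essentially correct, but it takes a genuinely different route from the paper. The paper proves the contrapositive directly: assuming an unstable $\mu$ has a cuwbr partial preimage $H(\mu)$, it exhibits, for each source of instability (a blocking pair $(m,w)$ or an individual-rationality violation), a report outside $H^i(\mu)$ that is a weakly better reply for the affected player against some $x \in \Theta(H(\mu))$, contradicting closure. You instead work inside $H(\mu)$: you use cuwbr to show that men's truthful reports must lie in $H^m(\mu)$, build a pure profile $s^*\in H(\mu)$ with men truthful and each woman playing a dominance-maximal element of $H^w(\mu)$, verify via the contrapositive of cuwbr that $s^*$ is a Nash equilibrium in undominated strategies, and then invoke Roth's 1984 characterization to conclude $\Gamma(s^*)=\mu$ is stable. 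The individual steps check out: deviations outside $H^i(\mu)$ are strictly worse against the vertex $s^*$ precisely because $\alpha^i(H(\mu))\subset H^i(\mu)$, and your maximality argument correctly rules out dominating strategies outside $H^w(\mu)$. What your route buys is that it outsources exactly the step the paper handles by hand and handles somewhat loosely --- the deferred-acceptance case analysis showing that promoting the blocking partner is at least a weakly better reply --- and it yields the additional structural fact that every cuwbr partial preimage contains an undominated equilibrium with men truthful. What it costs is reliance on Roth's theorem, whose hypothesis is ``men truthful and women's reports form an equilibrium''; your explicit choice $s^*_m=P_m$ meets that hypothesis, so the undominatedness of the women's strategies is not actually needed for the citation (it only matters if one insists on the ``undominated equilibrium'' phrasing of Roth's result). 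Two minor caveats: the step ``cuwbr forces $P_m\in H^m(\mu)$'' presupposes $I=M\cup W$; when $I=W$ with men exogenously truthful (as in the paper's simulations) that step is vacuous but the rest of your argument survives unchanged. And your second paragraph's self-contained alternative is, as you acknowledge, only a sketch --- but since the Roth route is complete, that does not damage the proof.
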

\begin{proof}
We concentrate on the case where $M$-optimal stable algorithms are used.
The same argument holds for $W$-optimal stable algorithms.
Players are indifferent as to which of $M$-optimal stable algorithms is in use,
because all of them yield the same matching from a given report profile.
Therefore we assume that the DA algorithm with men proposing \cite{gale1962} is used without loss of generality.

Contrary to the theorem, suppose that an unstable matching $\mu_u$ is asymptotically stable.
$\mu_u$ has a partial preimage $H(\mu_u)$ that is cuwbr.
The instability of $\mu_u$ implies that 
it has a blocking pair or it is not individually rational.
When it has a blocking pair $(m, w)$,
the fact that $m$ and $w$ are not paired with each other under $\mu_u$ indicates that $m$ did not propose to $w$.
Thus all the reports of $m$ in $H(\mu_u)$ rank $\mu_u(m)$ above $w$.
Let $P(m)$ be such a preference, where $\mu_u(m)$ is ranked above $w$,
and $P'(m)$ be a preference of $m$ where $w$ is ranked above $\mu_u(m)$.
Note that $P'(m)$ is not a element of $H(\mu_u)$.
\begin{enumerate}
    \renewcommand{\theenumi}{\roman{enumi}}%
    \item If all the reports of $w$ in $H(\mu_u)$ rank $m$ above $\mu_u(w)$, 
    $m$ can be better off by reporting $P'(m)$.
    This contradicts the closure under weakly better replies (cuwbr) of $H(\mu_u)$.
    \item If all the reports of $w$ in $H(\mu_u)$ rank $\mu_u(w)$ above $m$,
    $m$ is not worse off by reporting $P'(m)$.
    This contradicts the cuwbr of $H(\mu_u)$.
    \item If reports of $w$ in $H(\mu_u)$ include both types of preferences described above,
    $m$ is at least not worse off by reporting $P'(m)$.
    This contradicts the cuwbr of $H(\mu_u)$.
\end{enumerate}
Therefore an asymptotically stable matching cannot have a blocking pair.
When $\mu_u$ is not individually rational,
some player $i$ is paired with someone unacceptable: i.e. $i$ prefers itself to $\mu_u(i)$.
This indicates that all the reports of $i$ in $H(\mu_u)$ rank $\mu_u(i)$ above $i$,
and thus $i$ can be better off by reporting a preference where $i$ is preferred to $\mu_u(i)$,
which contradicts the cuwbr of $H(\mu_u)$.
Therefore an asymptotically stable matching must be individually rational,
and this completes the proof.
\end{proof}%
%



\section{Simulations of replicator dynamics} \label{sec:sim}
\begin{table}[t] \centering
    \caption{All the possible preferences, matchings, and their labels.}
    \label{tab:S}
    \begin{minipage}{0.25\linewidth} \centering
        \subcaption{Male's possible preferences.}
        \begin{tabular}{cc}
            \toprule
            label & preference \\ \midrule
            $M1$ & $(w_1, w_2, w_3)$ \\
            $M2$ & $(w_1, w_3, w_2)$ \\
            $M3$ & $(w_2, w_1, w_3)$ \\
            $M4$ & $(w_2, w_3, w_1)$ \\
            $M5$ & $(w_3, w_1, w_2)$ \\
            $M6$ & $(w_3, w_2, w_1)$ \\
            \bottomrule
        \end{tabular}
        \label{tab:S^M}
    \end{minipage}
    \begin{minipage}{0.25\linewidth} \centering
        \subcaption{Female's possible preferences.}
        \begin{tabular}{cc}
            \toprule
            label & preference \\ \midrule
            $W1$ & $(m_1, m_2, m_3)$ \\
            $W2$ & $(m_1, m_3, m_2)$ \\
            $W3$ & $(m_2, m_1, m_3)$ \\
            $W4$ & $(m_2, m_3, m_1)$ \\
            $W5$ & $(m_3, m_1, m_2)$ \\
            $W6$ & $(m_3, m_2, m_1)$ \\
            \bottomrule
        \end{tabular}
        \label{tab:S^W}
    \end{minipage}
    \begin{minipage}{0.4\linewidth} \centering
        \subcaption{All the possible matchings.}
        \begin{tabular}{cc}
            \toprule
            label & matching \\ \midrule
            $\mu_1$ & $[(m_1, w_1), (m_2, w_2), (m_3, w_3)]$ \\
            $\mu_2$ & $[(m_1, w_1), (m_2, w_3), (m_3, w_2)]$ \\
            $\mu_3$ & $[(m_1, w_2), (m_2, w_1), (m_3, w_3)]$ \\
            $\mu_4$ & $[(m_1, w_2), (m_2, w_3), (m_3, w_1)]$ \\
            $\mu_5$ & $[(m_1, w_3), (m_2, w_1), (m_3, w_2)]$ \\
            $\mu_6$ & $[(m_1, w_3), (m_2, w_2), (m_3, w_1)]$ \\
            \bottomrule
        \end{tabular}
        \label{tab:matching}
    \end{minipage}
\end{table}

\begin{figure}[thbp]
    \centering
    \begin{tabular}{c}
        \begin{minipage}{\linewidth}
            \centering
            \includegraphics[width=0.9\linewidth]{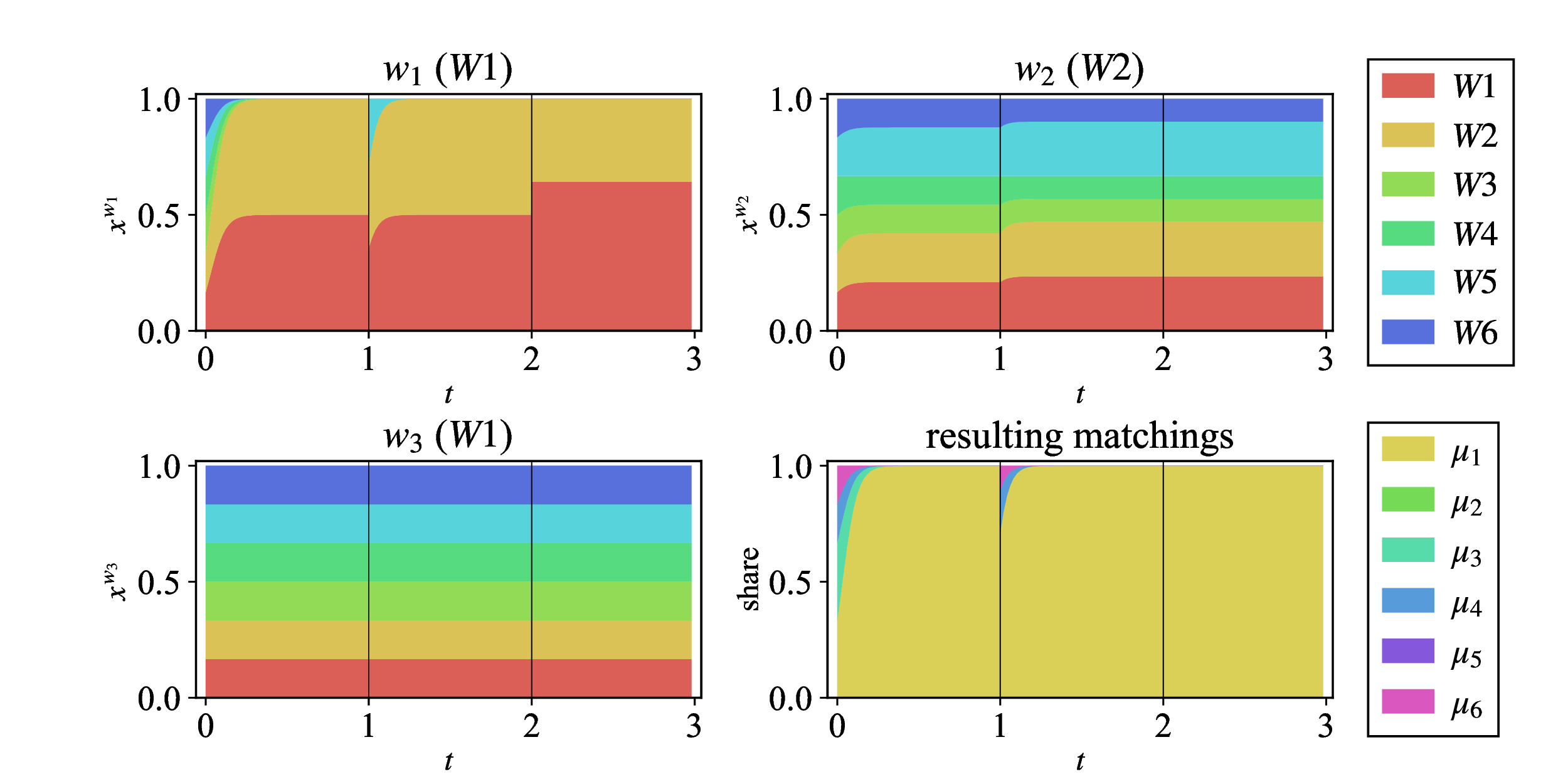}
            \subcaption{%
                The dynamics of the first problem $(M, W, P_1)$.
                0.1 was added to $x^{w_1}_{W5}$ at $t = 1$ and to $x^{w_1}_{W1}$ at $t = 2$ as perturbations.
                The state got stationary at $t = 1, 2, 3$.
                At the stationary state, $W1$ and $W2$ survived in the $w_1$ population, and no strategy went extinct in the $w_2$ and $w_3$ populations.
                There were 72 possible report profiles at the stationary state.
            }
            \label{fig:simres-P1}
        \end{minipage}
        \\
        \begin{minipage}{\linewidth}
            \centering
            \includegraphics[width=0.9\linewidth]{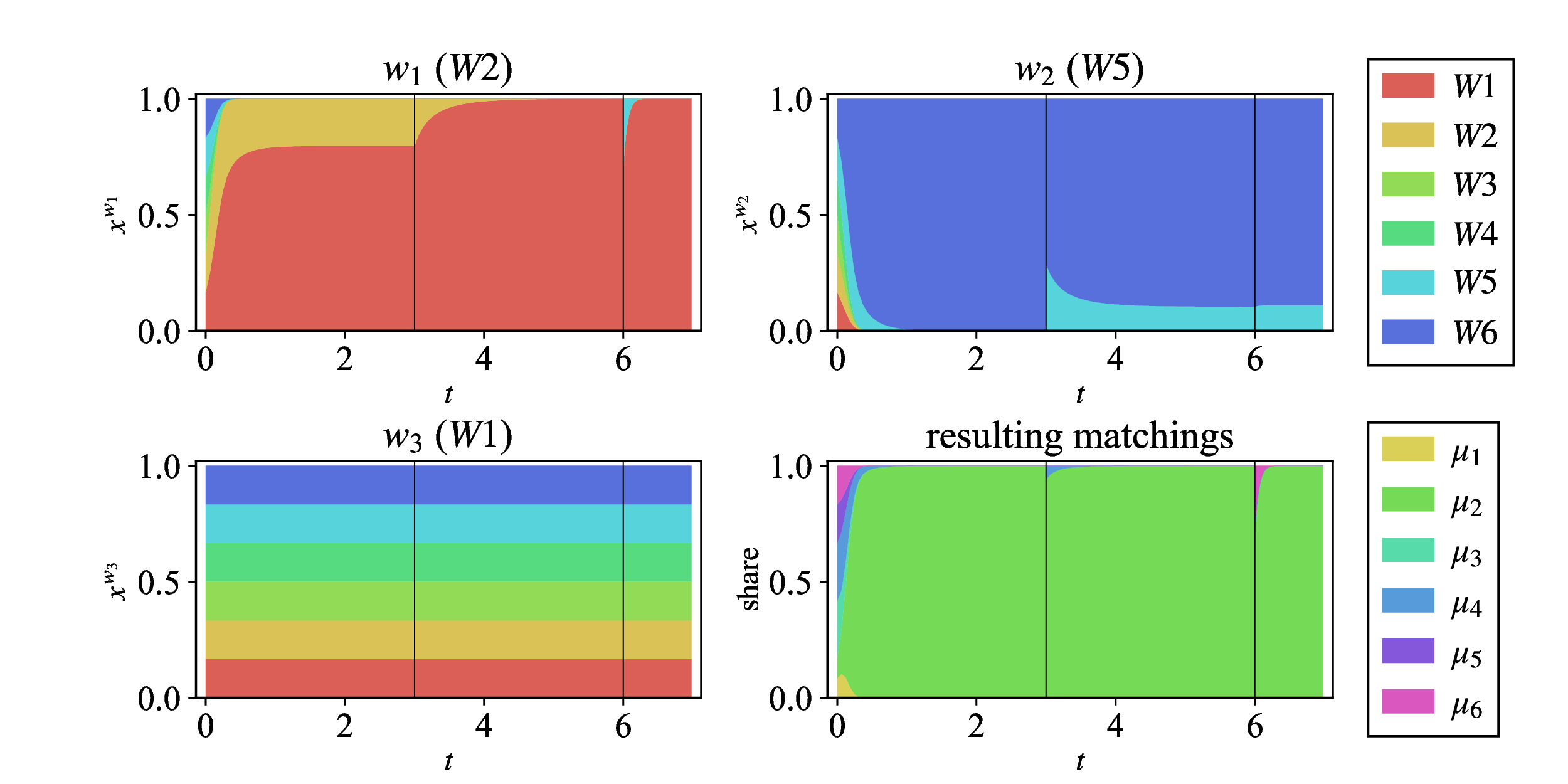}
            \subcaption{
                The dynamics of the second problem $(M, W, P_2)$.
                0.1 was added to $x^{w_2}_{W5}$ at $t = 3$ and to $x^{w_1}_{W5}$ at $t = 6$ as perturbations.
                The state got stationary at $t = 3, 6, 7$.
                At the stationary state, $W1$ and $W2$ in the $w_1$ population and $W6$ in the $w_2$ population survived.
                No strategy went extinct in the $w_3$ population.
                There were 12 possible report profiles at the stationary state.
            }
            \label{fig:simres-P2}
        \end{minipage}
    \end{tabular}
    \caption{%
        RD of equation \eqref{eq:RD} were numerically solved from the initial condition
        $x^i_0 = \left( \frac{1}{6} \right)_{h \in S^i}$ for all $i \in I$.
        The utility function was set to $u_{\tconv} = (25, 10, 1)$.
        The player position $i$ and its true preference are shown on top of each plot.
        The lower right panel shows the shares of the resulting matchings at every time.
        The other three panels show the time evolutions of the strategy profile in each player population.
    }
    \label{fig:simres}
\end{figure}

So far, we have discussed game dynamics of preference revelation games theoretically.
In this section, we take two instances marriage problems, and present numerical simulations of replicator dynamics (RD) of preference revelation games.
We pay particular attention to matchings that will be realized after report profiles evolve for some time according to RD.
We consider marriage problems with three players on each side: i.e. $M = \qty{m_1, m_2, m_3}$ and $W = \qty{w_1, w_2, w_3}$.
It is assumed that male players do not use weakly dominated strategies, which means they always report truthfully.
Thus in this section, the set of strategic players is $I = W$.
All the possible reports are shown in tables \ref{tab:S},
where $(m_1, m_2, m_3)$ denotes a preference of a woman who prefers $m_1$ to $m_2$, and $m_2$ to $m_3$.
We will refer to a pure strategy by its label shown in the tables.
The pure strategy space of the game is $S = \qty{W1, W2, \dots, W6}^3$.
It is assumed in this section that all players are acceptable.
RD of equation \eqref{eq:RD} were numerically solved with the initial condition where all pure strategies evenly exist: i.e.
$x^i_0 = \qty( \frac{1}{6} )_{h \in S^i}$ for all $i \in I$.
The DA algorithm with men proposing was used in obtaining a matching from a report profile, and 
matchings were mapped to payoffs by the utility function $u_{\tconv} = (25, 10, 1)$.
For example, when a replicator is paired with the most preferable man, its payoff is 25.
We say the state $x$ is at the stationary state at time $t$, if the maximum changes in $x^i$ during the last few calculation steps were sufficiently small\footnote{%
Let the state at time $t$, which corresponds to the $n$-th calculation point, be $x_n$.
Let the state at $i$ calculation steps before $x_n$ be $x_{n - i}$.
We say $x$ is at the stationary at time $t$, if $\max_{i \in I}(\abs{x^i_{n - i} - x^i_{n - i - 1}}) < 10^{-5}$ for $i \in \qty{0, \dots, 3}$.
}.
Furthermore, we say that a pure strategy survived when its population share at the stationary state was more than 0.1\%.

The marriage problems we investigated are $(M, W, P_1)$ and $(M, W, P_2)$, where 
\begin{equation}
    P_1 = (M1, M1, M2, W1, W2, W1) \qq{and} P_2 = (M3, M1, M1, W2, W5, W1).
\end{equation}
The preference of $m_1$ is shown first in a preference profile, 
that of $m_2$ second and so on, the last being the preference of $w_3$.
In the first problem with $P_1$, the $M$-optimal and $W$-optimal stable matchings are
\begin{equation}
    \mu^M_1 = [(m_1, w_1), (m_2, w_2), (m_3, w_3)] \qq{and} \mu^W_1 = [(m_1, w_1), (m_2, w_3), (m_3, w_2)].
\end{equation}
Under $\mu^M_1$, $m_1$ is paired with $w_1$, $m_2$ paired with $w_2$, and $m_3$ paired with $w_3$.
The $M$-optimal and $W$-optimal stable matchings do not coincide, 
but no woman has the incentive to falsify assuming others are truthful\footnote{%
    Theorem 1 of Gale and Sotomeyer \cite{gale1985a} asserts that 
    there is at least one woman who will be better off by falsifying, if there is more than one stable matching.
    In the problem with $P_1$, no woman has the incentive to misreport even though there are multiple stable matchings,
    because we assumed that all players were acceptable.
}.
In the second problem\footnote{%
    $P_2$ was taken from the proof of Theorem 3 in Roth \cite{roth1982}.
} with $P_2$, the $M$-optimal and $W$-optimal stable matchings are
\begin{equation}
    \mu^M_2 = [(m_1, w_2), (m_2, w_3), (m_3, w_1)] \qq{and} \mu^W_2 = [(m_1, w_1), (m_2, w_3), (m_3, w_2)].
\end{equation}
Under $M$-optimal stable algorithms, at least one woman has the incentive to falsify so that $\mu^W_2$ instead of $\mu^M_2$ is obtained.
For instance, $\mu^W_2$ is realized if $w_2$ falsely reports $W6$.

Before presenting the simulation results, we check if asymptotically stable matchings exist in the two marriage problems\footnote{
    In practice, we first simulated replicator dynamics and then studied asymptotically stable matchings referring to the simulation results.
}.
In the first problem with $P_1$, the following set is a non-included partial preimage of $\mu^M_1$ under $M$-optimal stable algorithms:
\begin{equation} \label{eq:H(muM1)}
    H(\mu^M_1) := \{W1, W2\} \times \{W1, \dots, W6\} \times \{W1, \dots, W6\}.
\end{equation}
$H(\mu^M_1)$ turns out to be cuwbr, and thus $\mu^M_1$ is asymptotically stable, as shown in \ref{ssec:proof-p1}.
In the proof, we essentially checked if $\alpha(H(\mu^M_1))$ of equation \eqref{eq:alpha} satisfied the definition of cuwbr with payoff tables,
which was made workable due to the assumption of $I = W$.
In the second problem with $P_2$, one finds two non-included partial preimages of $\mu^W_2$ under $M$-optimal stable algorithms:
\begin{align}
    H_1(\mu^W_2) &= \{W1, W2\} \times \{W6\} \times \{W1 \dots W6\},
    \label{eq:H1(muW2)} \\
    H_2(\mu^W_2) &= \{W1\} \times \{W5, W6\} \times \{W1 \dots W6\}.
    \label{eq:H2(muW2)}
\end{align}
The following relations hold for $H_1$ and $H_2$, where $H^i$ satisfies $H = \prod_{i \in I} H^i$:
\begin{equation} \label{eq:cuwbr-H1}
    \begin{aligned}
        \alpha^{w_1}(H_1) &\subset H^{w_1}_1 \\
        \alpha^{w_2}(H_1) &\subset H^{w_2}_1 \cup \{W5\} \not\subset H^{w_2}_1
    \end{aligned}
\end{equation}
\begin{equation} \label{eq:cuwbr-H2}
    \begin{aligned}
        \alpha^{w_1}(H_2) &\subset H^{w_1}_2 \cup \{W2\} \not\subset H^{w_1}_2 \\
        \alpha^{w_2}(H_2) &\subset H^{w_2}_2
    \end{aligned}
\end{equation}
Equations \eqref{eq:cuwbr-H1} and \eqref{eq:cuwbr-H2} indicate that neither $H_1$ nor $H_2$ is cuwbr.
Indeed, one can show that there is no partial preimage of $\mu^W_2$ that is cuwbr, implying that $\mu^W_2$ is not asymptotically stable.
Details on the proofs are in \ref{ssec:proof-p2} and \ref{ssec:proof-p2-nocuwbr}.
In addition, there is no partial preimage of $\mu^M_2$ that is cuwbr, since $w_1$'s reporting $W1$ earns a higher payoff when others are reporting truthfully.
Therefore, no asymptotically stable matching exists in the second problem.

In figures \ref{fig:simres}, the time evolution of $x^i$ and the resulting matchings are presented.
In the first problem (figure \ref{fig:simres-P1}), truthful reportings of all players have survived.
All the possible report profiles at the stationary state result in the $M$-optimal stable matching $\mu^M_1$.
One can see that although the values of some $x^i_h$ changed after perturbation (e.g. $x^{w_1}_{W1}$ at $t = 2$), the set of surviving reports were always $H(\mu^M_1)$, and the resulting matching remained $\mu^M_1$ regardless of the perturbations.
This demonstrates the asymptotic stability of $H(\mu^M_1)$ and thus $\mu^M_1$.
In the second problem (figure \ref{fig:simres-P2}), truthful reporting went extinct in the $w_2$ population.
All the possible report profiles at the stationary state result in the $W$-optimal stable matching $\mu^W_2$, not $\mu^M_2$.
However, while the set of surviving reports was $H_1(\mu^W_2)$ at $t = 3$, it changed to $H_2(\mu^W_2)$ as the result of the perturbation at $t = 3$.
In short, even though $\mu^W_2$ seemed robust against the perturbations, it is not asymptotically stable in our definition.
We will explore this situation in the next section.

To conclude this section, we briefly discuss the simulation results.
Firstly, the selection favored only one matching in both cases, even though several strategies coexisted in many player populations at the stationary state.
This is not due to the choice of the preference profiles.
Assuming that three players are present on each side and all players are acceptable, there are $6^6$ possible preference profiles.
We fixed the preference of $m_1$ without loss of generality, 
and conducted the simulation of RD for all $6^5$ possible preference profiles in case of $I = W$.
Except the 18 cases (0.2\%) where $x$ kept fluctuating and thus we could not determine the final states,
only one matching was possible at the final state in all instances.
These observations imply that it is natural to consider the asymptotic stability of a matching.
Secondly, the simulation results were consistent with previous research on the incentives of players.
In the first problem, where no woman has the incentive to misreport, the truthful reporting strategies of women survived.
In the second problem, at least one woman has the incentive to falsify, and only the false reporting survived in the $w_2$ population accordingly.
Still, $w_2$ truthfully reported the most preferable men, which is consistent with the result of Roth \cite{roth1982} that no woman has the incentive to falsify her most preferred man.

\section{Discussion} \label{sec:disc}

\begin{figure}[thbp]
    \centering
    \begin{tabular}{c}
        \begin{minipage}{\linewidth}
            \centering
            \includegraphics[width=0.9\linewidth]{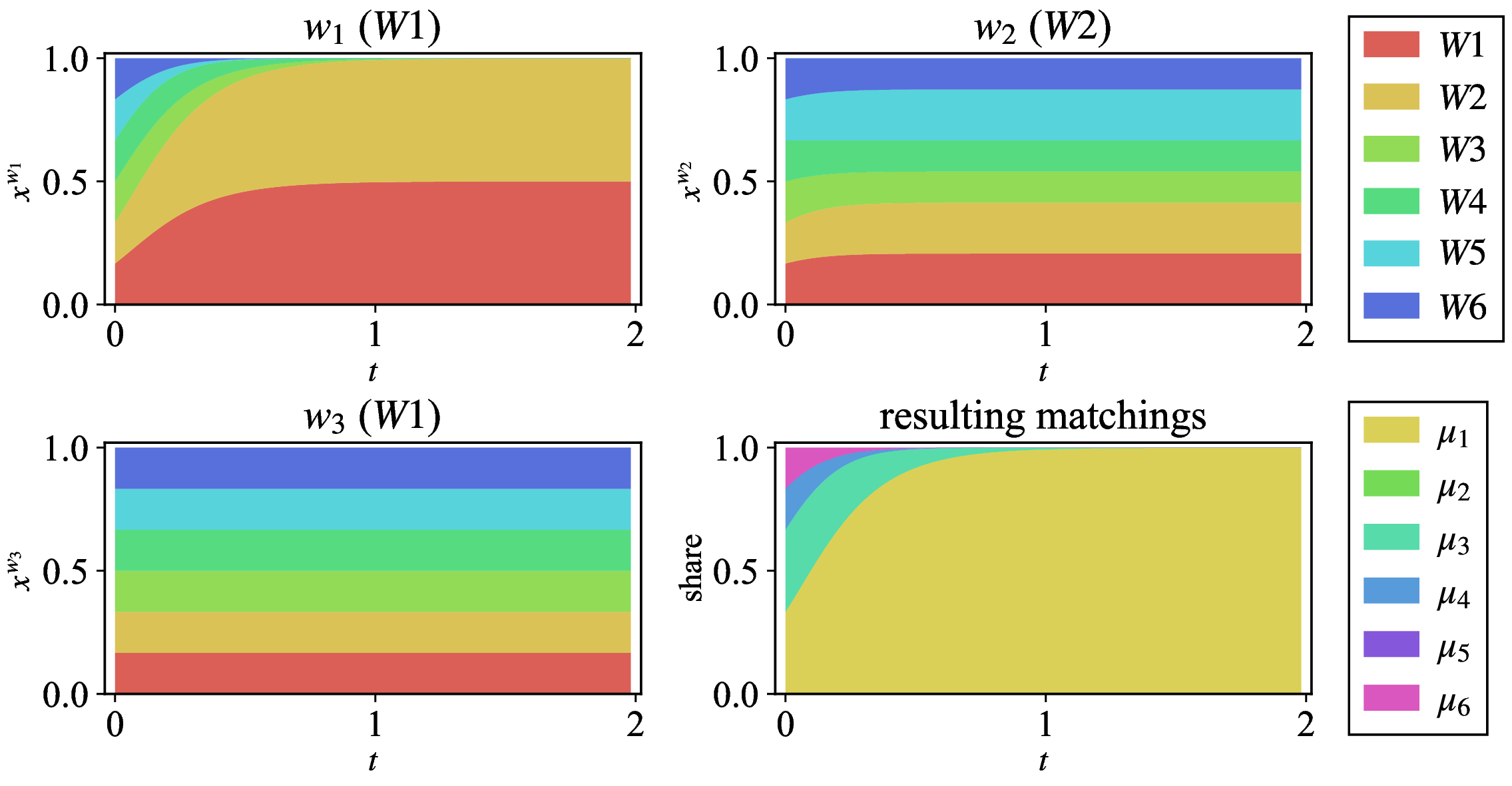}
            \subcaption{%
                The time evolution of $x^i$ with $P_1$.
                $u_{\tlin} = (15, 10, 5)$ was used as the utility function.
                The dynamics of $x^i$ look similar with figure \ref{fig:simres-P1},
                but the time scale was slower with $u_{\tlin}$ than with $u_{\tconv}$.
            }
            \label{fig:simres-P1l}
        \end{minipage}
        \\
        \begin{minipage}{\linewidth}
            \centering
            \includegraphics[width=.9\linewidth]{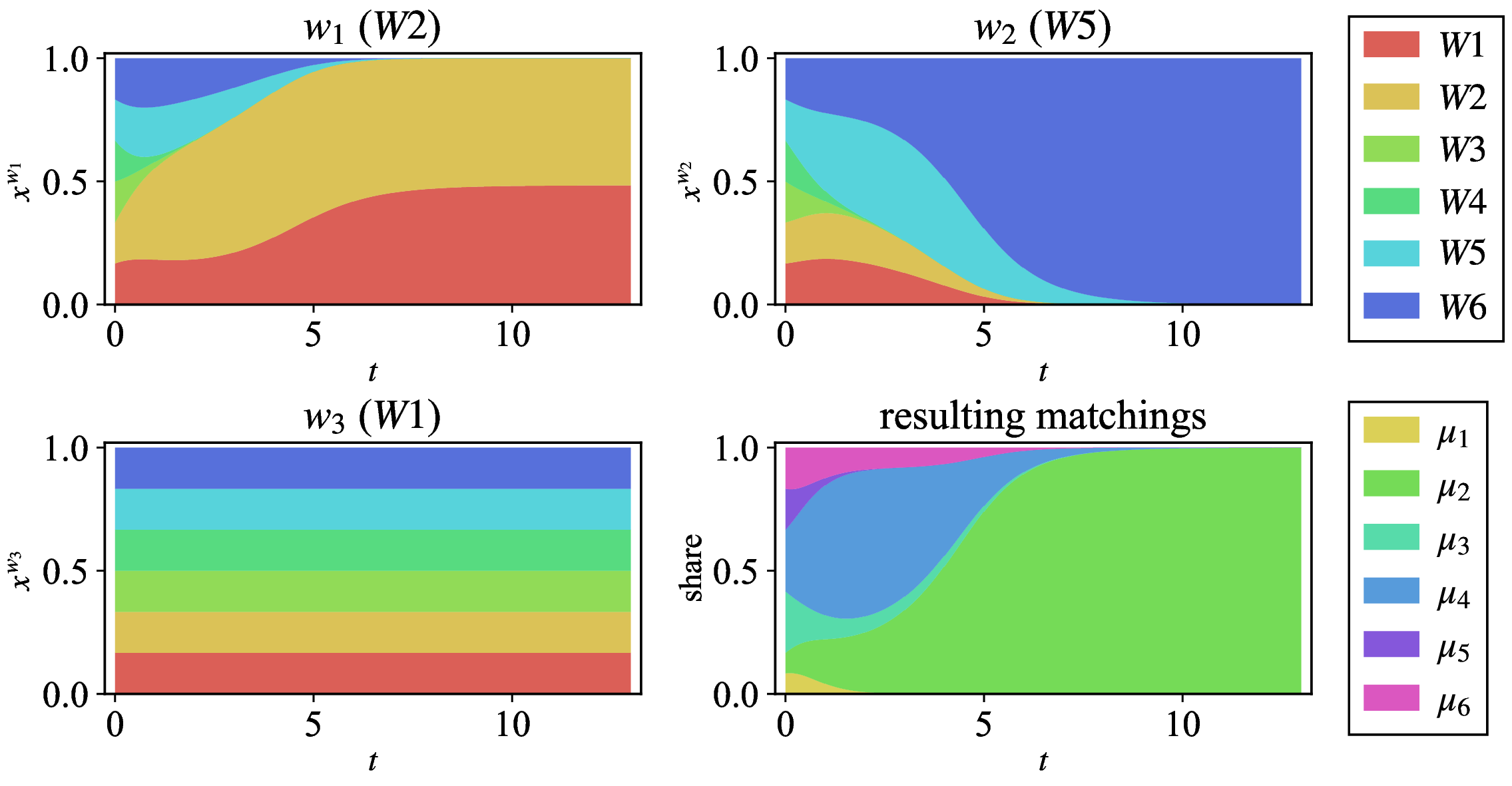}
            \subcaption{%
                The time evolution of $x^i$ with $P_2$.
                $u_{\tconc} = (5\sqrt{3}, 5\sqrt{2}, 5)$ was used as the utility function.
                Not only the transient dynamics but also the share of $W1$ in the $w_1$ population at the stationary state differed from figure \ref{fig:simres-P2}, where $u_{\tconv}$ was used.
            }
            \label{fig:simres-P2sq}
        \end{minipage}
        \end{tabular}
    \caption{%
        RD of equation \eqref{eq:RD} were numerically solved from the initial condition
        $x^i_0 = \qty(\frac{1}{6})_{h \in S^i}$  for all $i \in I$.
        The player position $i$ and its true preference are shown on top of each plot.
        The lower right panel shows the shares of the resulting matchings at every time.
        The other three panels show the time evolutions of the strategy profile in each player population.
    }
\end{figure}

The simulation result of the second problem $(M, W, P_2)$ in figure \ref{fig:simres-P2} suggests that
the $W$-optimal stable matching $\mu^W_2$ was restored after small perturbations were given, even though $\mu^W_2$ is not asymptotically stable.
What equations \eqref{eq:cuwbr-H1} and \eqref{eq:cuwbr-H2} tell us about two non-included partial preimages of $\mu^W_2$ are that
$H_1$ is not cuwbr because $W5$ of $w_2$ is a weakly better reply to some $x \in \Theta(H_1)$, and that
$H_2$ is not cuwbr because $W2$ of $w_1$ is a weakly better reply to some $x \in \Theta(H_2)$.
Suppose that the current population state is $x \in \Theta(H_1)$, and some replicators in $w_2$ population start to use $W5$.
This would move the population state to $x' \in \Theta(H')$, where
\begin{equation}
    H' := \{W1, W2\} \times \{W5, W6\} \times \{W1, \dots, W6\}.
\end{equation}
If we start from $x \in \Theta(H_2)$ and some replicators in $w_1$ population come to use $W2$, the population state would also get $x' \in \Theta(H')$.
In addition, one can show that $x' \in \Theta(H')$ evolves into another state $x'' \in \Theta(H_1 \cup H_2)$, where only $\mu^W_2$ is obtained, as long as selection dynamics are payoff-positive\footnote{%
    Details are in \ref{sec:proof-disc}.
}.
Therefore $\mu^W_2$ will in fact be restored after small perturbations are given.

In the second problem with $P_2$, $\mu^W_2$ is not asymptotically stable, even though it will be restored after small perturbations.
The essentials behind this confusing situation are two non-included partial preimages of $\mu^W_2$, one of which contains a weakly better reply to the other and vice versa.
All partial preimages are not cuwbr and thus $\mu^W_2$ is not asymptotically stable by our definition,
but still, a mixed strategy profile $x$ moves only within partial preimages of the same matching.
In order to characterize the robustness of $\mu^W_2$, we propose a weaker property of a matching, quasi-asymptotic stability:
\begin{define} \label{def:qasm}
    A matching $\mu$ is \emph{quasi-asymptotically stable} 
    if it has a set of partial preimages
    \begin{equation}
        \qty{H_1(\mu), \dots, H_n(\mu)} \qq{such that} \alpha^i(H_m(\mu)) \subset \bigcup_{k = 1}^{n} H^i_k(\mu) \quad \forall\ m \in \{1, \dots, n\}, \ i \in I.
    \end{equation}
\end{define}
When $n = 1$, it is equivalent to the asymptotic stability of Definition \ref{def:asm}.
$\mu^W_2$ is quasi-asymptotically stable with $\qty{H_1, H_2}$.
We presume the quasi-asymptotic stability portrays some kind of dynamical stability of a matching similar to the asymptotic stability,
but we have not yet been able to neither prove it nor find a counterexample.

In the current paper, preference relations are described by numerical values.
Obviously, the payoff values must satisfy $u^i_1 > \dots > u^i_n$, but still 
there are countless candidates for the utility function.
To examine whether the choice of the utility function affects the dynamics,
we conducted simulations of RD in the two marriage problems with two other utility functions, 
$u_{\tlin} = (15, 10, 5)$ and $u_{\tconc} = (5\sqrt{3}, 5\sqrt{2}, 5)$.
The utility function used in the preceding section was $u_{\tconv} = (25, 10, 1)$.
$u_{\tconv}$, $u_{\tlin}$ and $u_{\tconc}$ are convex, linear and concave, respectively.
In the first problem with $P_1$, the dynamics of $x^i$ did not seem to be affected by the choice of the utility function, but their time scales differed.
The dynamics with the linear utility function is shown in figure \ref{fig:simres-P1l}.
In the second problem with $P_2$, transient dynamics slightly varied and the share of $W1$ in the $w_1$ population at the last state was also affected by the choice of the utility function.
Figure \ref{fig:simres-P2sq} shows the dynamics with the concave utility function.
Nevertheless, strategies that survived remained the same regardless of the utility function.
Our analysis have mainly focused on the last states of dynamics and the matchings which were obtained there.
Thus, in view of the observations above, we expect our simulation results to be independent on the choice of the utility function.

\section{Conclusion} \label{sec:conc}
Although the literature on centralized matching markets often assumes that a true preference of each player is known to herself and fixed throughout the matching process, several empirical results \cite{dwenger2018, narita2018,grenet2021} cast doubt on the assumption.
To circumvent the problem, evolutionary dynamics of preference revelation games are considered.
We formulated the asymptotic stability of a matching, which indicates the dynamical robustness of the matching against sufficiently small changes in players' reporting strategies.
We showed that the asymptotic stability of a matching implies its stability, thereby contributing a practical suggestion on how to obtain stable matchings in centralized markets:
by having a learning phase during which players find reporting strategies that suit their true preferences through trial and error, an asymptotically stable and hence stable matching would be obtained.
Note that our insight does not assume players' initial awareness of their true preferences, unlike much of the literature.

We believe the current paper lays a foundation for dynamical approach to centralized matching markets with evolutionary dynamics of preference revelation games,
and there is plenty of room for future research.
One issue that should be tackled is the difficulty in finding asymptotically stable matchings.
We checked the asymptotic stability according to its definition, looking at payoff matrices, but 
this procedure quickly becomes unfeasible as the number of players increase.
Our approach with continuous-time deterministic selection dynamics is applicable to wider range of problems, 
provided the payoff function is deterministic and thus payoffs are never determined randomly.
We assumed preferences were strict and $M$-optimal stable algorithms were used, but
players' preferences need not be strict. 
When preferences are not strict, deterministic payoff functions can be constructed by employing matching algorithms that have deterministic tie-breaking rules.
Our approach can also be adopted to many-to-one or many-to-many matching problems.
In these cases, payoff functions are likely to be much complicated and straightforward analysis may get cumbersome.
Directions for future research include analyses of other types of game dynamics, such as ones with mutations or on graphs, and 
search for connections between the asymptotic stability of a matching and stochastic stability analyzed in research on decentralized markets \cite{jackson2002,newton2015,bilancini2020}.
Another interesting topic would be a partial preimage of a matching.
We introduced a partial preimage to formulate the asymptotic stability of a matching based on the concept,
but a partial preimage itself may hold information that characterizes the matching.
For instance, the cardinality of a non-included partial preimage is likely to reflect the likelihood that a matching is realized when all players randomly submit their reports.


    \appendix
\section{Payoff matrices}
\begin{table}
    \centering
    \caption{The payoff matrices when the preference profile is $P_1$, $I = W$, 
            and the utility function is $\bm{u}_{conv} = (25, 5, 1)$.
            Each subtable shows the payoff matrix when the strategy of $w_3$ is fixed.
            The row and column labels indicate the strategy of $w_1$ and $w_2$, respectively.
            Each entry denotes the payoffs for $w_1$, $w_2$, and $w_3$ from the top.}
    \label{tab:pm-P1}
    \begin{subtable}{0.31\linewidth}
        \centering
        \caption{The payoffs when $w_3$ uses $W1$.}
        \label{tab:pm-P1-W1}
        \scriptsize
        \begin{tabular}{cc@{ }c@{ }c@{ }c@{ }c@{ }c} \toprule
            & \multicolumn{6}{c}{$w_2$} \\ \cmidrule{2-7}
            $w_1$ & $W1$ & $W2$ & $W3$ & $W4$ & $W5$ & $W6$ \\ \toprule
            $W1$ &
                $\begin{matrix} 25 \\ 1 \\ 1 \end{matrix}$ &
                $\begin{matrix} 25 \\ 1 \\ 1 \end{matrix}$ &
                $\begin{matrix} 25 \\ 1 \\ 1 \end{matrix}$ &
                $\begin{matrix} 25 \\ 1 \\ 1 \end{matrix}$ &
                $\begin{matrix} 25 \\ 1 \\ 1 \end{matrix}$ &
                $\begin{matrix} 25 \\ 1 \\ 1 \end{matrix}$ \\ \midrule
            $W2$ &
                $\begin{matrix} 25 \\ 1 \\ 1 \end{matrix}$ &
                $\begin{matrix} 25 \\ 1 \\ 1 \end{matrix}$ &
                $\begin{matrix} 25 \\ 1 \\ 1 \end{matrix}$ &
                $\begin{matrix} 25 \\ 1 \\ 1 \end{matrix}$ &
                $\begin{matrix} 25 \\ 1 \\ 1 \end{matrix}$ &
                $\begin{matrix} 25 \\ 1 \\ 1 \end{matrix}$ \\ \midrule
            $W3$ &
                $\begin{matrix} 5 \\ 25 \\ 1 \end{matrix}$ &
                $\begin{matrix} 5 \\ 25 \\ 1 \end{matrix}$ &
                $\begin{matrix} 5 \\ 25 \\ 1 \end{matrix}$ &
                $\begin{matrix} 5 \\ 25 \\ 1 \end{matrix}$ &
                $\begin{matrix} 5 \\ 25 \\ 1 \end{matrix}$ &
                $\begin{matrix} 5 \\ 25 \\ 1 \end{matrix}$ \\ \midrule
            $W4$ &
                $\begin{matrix} 5 \\ 25 \\ 1 \end{matrix}$ &
                $\begin{matrix} 5 \\ 25 \\ 1 \end{matrix}$ &
                $\begin{matrix} 5 \\ 25 \\ 1 \end{matrix}$ &
                $\begin{matrix} 5 \\ 25 \\ 1 \end{matrix}$ &
                $\begin{matrix} 5 \\ 25 \\ 1 \end{matrix}$ &
                $\begin{matrix} 5 \\ 25 \\ 1 \end{matrix}$ \\ \midrule
            $W5$ &
                $\begin{matrix} 1 \\ 25 \\ 5 \end{matrix}$ &
                $\begin{matrix} 1 \\ 25 \\ 5 \end{matrix}$ &
                $\begin{matrix} 1 \\ 1 \\ 25 \end{matrix}$ &
                $\begin{matrix} 1 \\ 1 \\ 25 \end{matrix}$ &
                $\begin{matrix} 1 \\ 25 \\ 5 \end{matrix}$ &
                $\begin{matrix} 1 \\ 1 \\ 25 \end{matrix}$ \\ \midrule
            $W6$ &
                $\begin{matrix} 1 \\ 25 \\ 5 \end{matrix}$ &
                $\begin{matrix} 1 \\ 25 \\ 5 \end{matrix}$ &
                $\begin{matrix} 1 \\ 1 \\ 25 \end{matrix}$ &
                $\begin{matrix} 1 \\ 1 \\ 25 \end{matrix}$ &
                $\begin{matrix} 1 \\ 25 \\ 5 \end{matrix}$ &
                $\begin{matrix} 1 \\ 1 \\ 25 \end{matrix}$ \\ \bottomrule
        \end{tabular}
    \end{subtable}
    \begin{subtable}{0.31\linewidth}
        \centering
        \caption{The payoffs when $w_3$ uses $W2$.}
        \label{tab:pm-P1-W2}
        \scriptsize
        \begin{tabular}{cc@{ }c@{ }c@{ }c@{ }c@{ }c} \toprule
            & \multicolumn{6}{c}{$w_2$} \\ \cmidrule{2-7}
            $w_1$ & $W1$ & $W2$ & $W3$ & $W4$ & $W5$ & $W6$ \\ \toprule
            $W1$ &
                $\begin{matrix} 25 \\ 1 \\ 1 \end{matrix}$ &
                $\begin{matrix} 25 \\ 1 \\ 1 \end{matrix}$ &
                $\begin{matrix} 25 \\ 1 \\ 1 \end{matrix}$ &
                $\begin{matrix} 25 \\ 1 \\ 1 \end{matrix}$ &
                $\begin{matrix} 25 \\ 1 \\ 1 \end{matrix}$ &
                $\begin{matrix} 25 \\ 1 \\ 1 \end{matrix}$ \\ \midrule
            $W2$ &
                $\begin{matrix} 25 \\ 1 \\ 1 \end{matrix}$ &
                $\begin{matrix} 25 \\ 1 \\ 1 \end{matrix}$ &
                $\begin{matrix} 25 \\ 1 \\ 1 \end{matrix}$ &
                $\begin{matrix} 25 \\ 1 \\ 1 \end{matrix}$ &
                $\begin{matrix} 25 \\ 1 \\ 1 \end{matrix}$ &
                $\begin{matrix} 25 \\ 1 \\ 1 \end{matrix}$ \\ \midrule
            $W3$ &
                $\begin{matrix} 5 \\ 25 \\ 1 \end{matrix}$ &
                $\begin{matrix} 5 \\ 25 \\ 1 \end{matrix}$ &
                $\begin{matrix} 5 \\ 25 \\ 1 \end{matrix}$ &
                $\begin{matrix} 5 \\ 25 \\ 1 \end{matrix}$ &
                $\begin{matrix} 5 \\ 25 \\ 1 \end{matrix}$ &
                $\begin{matrix} 5 \\ 25 \\ 1 \end{matrix}$ \\ \midrule
            $W4$ &
                $\begin{matrix} 5 \\ 25 \\ 1 \end{matrix}$ &
                $\begin{matrix} 5 \\ 25 \\ 1 \end{matrix}$ &
                $\begin{matrix} 5 \\ 25 \\ 1 \end{matrix}$ &
                $\begin{matrix} 5 \\ 25 \\ 1 \end{matrix}$ &
                $\begin{matrix} 5 \\ 25 \\ 1 \end{matrix}$ &
                $\begin{matrix} 5 \\ 25 \\ 1 \end{matrix}$ \\ \midrule
            $W5$ &
                $\begin{matrix} 1 \\ 25 \\ 5 \end{matrix}$ &
                $\begin{matrix} 1 \\ 25 \\ 5 \end{matrix}$ &
                $\begin{matrix} 1 \\ 1 \\ 25 \end{matrix}$ &
                $\begin{matrix} 1 \\ 1 \\ 25 \end{matrix}$ &
                $\begin{matrix} 1 \\ 25 \\ 5 \end{matrix}$ &
                $\begin{matrix} 1 \\ 1 \\ 25 \end{matrix}$ \\ \midrule
            $W6$ &
                $\begin{matrix} 1 \\ 25 \\ 5 \end{matrix}$ &
                $\begin{matrix} 1 \\ 25 \\ 5 \end{matrix}$ &
                $\begin{matrix} 1 \\ 1 \\ 25 \end{matrix}$ &
                $\begin{matrix} 1 \\ 1 \\ 25 \end{matrix}$ &
                $\begin{matrix} 1 \\ 25 \\ 5 \end{matrix}$ &
                $\begin{matrix} 1 \\ 1 \\ 25 \end{matrix}$ \\ \bottomrule
        \end{tabular}
    \end{subtable}
    \begin{subtable}{0.31\linewidth}
        \centering
        \caption{The payoffs when $w_3$ uses $W3$.}
        \label{tab:pm-P1-W3}
        \scriptsize
        \begin{tabular}{cc@{ }c@{ }c@{ }c@{ }c@{ }c} \toprule
            & \multicolumn{6}{c}{$w_2$} \\ \cmidrule{2-7}
            $w_1$ & $W1$ & $W2$ & $W3$ & $W4$ & $W5$ & $W6$ \\ \toprule
            $W1$ &
                $\begin{matrix} 25 \\ 1 \\ 1 \end{matrix}$ &
                $\begin{matrix} 25 \\ 1 \\ 1 \end{matrix}$ &
                $\begin{matrix} 25 \\ 1 \\ 1 \end{matrix}$ &
                $\begin{matrix} 25 \\ 1 \\ 1 \end{matrix}$ &
                $\begin{matrix} 25 \\ 1 \\ 1 \end{matrix}$ &
                $\begin{matrix} 25 \\ 1 \\ 1 \end{matrix}$ \\ \midrule
            $W2$ &
                $\begin{matrix} 25 \\ 1 \\ 1 \end{matrix}$ &
                $\begin{matrix} 25 \\ 1 \\ 1 \end{matrix}$ &
                $\begin{matrix} 25 \\ 1 \\ 1 \end{matrix}$ &
                $\begin{matrix} 25 \\ 1 \\ 1 \end{matrix}$ &
                $\begin{matrix} 25 \\ 1 \\ 1 \end{matrix}$ &
                $\begin{matrix} 25 \\ 1 \\ 1 \end{matrix}$ \\ \midrule
            $W3$ &
                $\begin{matrix} 5 \\ 25 \\ 1 \end{matrix}$ &
                $\begin{matrix} 5 \\ 25 \\ 1 \end{matrix}$ &
                $\begin{matrix} 5 \\ 25 \\ 1 \end{matrix}$ &
                $\begin{matrix} 5 \\ 25 \\ 1 \end{matrix}$ &
                $\begin{matrix} 5 \\ 25 \\ 1 \end{matrix}$ &
                $\begin{matrix} 5 \\ 25 \\ 1 \end{matrix}$ \\ \midrule
            $W4$ &
                $\begin{matrix} 5 \\ 25 \\ 1 \end{matrix}$ &
                $\begin{matrix} 5 \\ 25 \\ 1 \end{matrix}$ &
                $\begin{matrix} 5 \\ 25 \\ 1 \end{matrix}$ &
                $\begin{matrix} 5 \\ 25 \\ 1 \end{matrix}$ &
                $\begin{matrix} 5 \\ 25 \\ 1 \end{matrix}$ &
                $\begin{matrix} 5 \\ 25 \\ 1 \end{matrix}$ \\ \midrule
            $W5$ &
                $\begin{matrix} 1 \\ 25 \\ 5 \end{matrix}$ &
                $\begin{matrix} 1 \\ 25 \\ 5 \end{matrix}$ &
                $\begin{matrix} 1 \\ 1 \\ 25 \end{matrix}$ &
                $\begin{matrix} 1 \\ 1 \\ 25 \end{matrix}$ &
                $\begin{matrix} 1 \\ 25 \\ 5 \end{matrix}$ &
                $\begin{matrix} 1 \\ 1 \\ 25 \end{matrix}$ \\ \midrule
            $W6$ &
                $\begin{matrix} 1 \\ 25 \\ 5 \end{matrix}$ &
                $\begin{matrix} 1 \\ 25 \\ 5 \end{matrix}$ &
                $\begin{matrix} 1 \\ 1 \\ 25 \end{matrix}$ &
                $\begin{matrix} 1 \\ 1 \\ 25 \end{matrix}$ &
                $\begin{matrix} 1 \\ 25 \\ 5 \end{matrix}$ &
                $\begin{matrix} 1 \\ 1 \\ 25 \end{matrix}$ \\ \bottomrule
        \end{tabular}
    \end{subtable}
    \bigskip

    \begin{subtable}{0.31\linewidth}
        \centering
        \caption{The payoffs when $w_3$ uses $W4$.}
        \label{tab:pm-P1-W4}
        \scriptsize
        \begin{tabular}{cc@{ }c@{ }c@{ }c@{ }c@{ }c} \toprule
            & \multicolumn{6}{c}{$w_2$} \\ \cmidrule{2-7}
            $w_1$ & $W1$ & $W2$ & $W3$ & $W4$ & $W5$ & $W6$ \\ \toprule
            $W1$ &
                $\begin{matrix} 25 \\ 1 \\ 1 \end{matrix}$ &
                $\begin{matrix} 25 \\ 1 \\ 1 \end{matrix}$ &
                $\begin{matrix} 25 \\ 1 \\ 1 \end{matrix}$ &
                $\begin{matrix} 25 \\ 1 \\ 1 \end{matrix}$ &
                $\begin{matrix} 25 \\ 1 \\ 1 \end{matrix}$ &
                $\begin{matrix} 25 \\ 1 \\ 1 \end{matrix}$ \\ \midrule
            $W2$ &
                $\begin{matrix} 25 \\ 1 \\ 1 \end{matrix}$ &
                $\begin{matrix} 25 \\ 1 \\ 1 \end{matrix}$ &
                $\begin{matrix} 25 \\ 1 \\ 1 \end{matrix}$ &
                $\begin{matrix} 25 \\ 1 \\ 1 \end{matrix}$ &
                $\begin{matrix} 25 \\ 1 \\ 1 \end{matrix}$ &
                $\begin{matrix} 25 \\ 1 \\ 1 \end{matrix}$ \\ \midrule
            $W3$ &
                $\begin{matrix} 5 \\ 25 \\ 1 \end{matrix}$ &
                $\begin{matrix} 5 \\ 25 \\ 1 \end{matrix}$ &
                $\begin{matrix} 5 \\ 25 \\ 1 \end{matrix}$ &
                $\begin{matrix} 5 \\ 25 \\ 1 \end{matrix}$ &
                $\begin{matrix} 5 \\ 25 \\ 1 \end{matrix}$ &
                $\begin{matrix} 5 \\ 25 \\ 1 \end{matrix}$ \\ \midrule
            $W4$ &
                $\begin{matrix} 5 \\ 25 \\ 1 \end{matrix}$ &
                $\begin{matrix} 5 \\ 25 \\ 1 \end{matrix}$ &
                $\begin{matrix} 5 \\ 25 \\ 1 \end{matrix}$ &
                $\begin{matrix} 5 \\ 25 \\ 1 \end{matrix}$ &
                $\begin{matrix} 5 \\ 25 \\ 1 \end{matrix}$ &
                $\begin{matrix} 5 \\ 25 \\ 1 \end{matrix}$ \\ \midrule
            $W5$ &
                $\begin{matrix} 1 \\ 25 \\ 5 \end{matrix}$ &
                $\begin{matrix} 1 \\ 25 \\ 5 \end{matrix}$ &
                $\begin{matrix} 1 \\ 1 \\ 25 \end{matrix}$ &
                $\begin{matrix} 1 \\ 1 \\ 25 \end{matrix}$ &
                $\begin{matrix} 1 \\ 25 \\ 5 \end{matrix}$ &
                $\begin{matrix} 1 \\ 1 \\ 25 \end{matrix}$ \\ \midrule
            $W6$ &
                $\begin{matrix} 1 \\ 25 \\ 5 \end{matrix}$ &
                $\begin{matrix} 1 \\ 25 \\ 5 \end{matrix}$ &
                $\begin{matrix} 1 \\ 1 \\ 25 \end{matrix}$ &
                $\begin{matrix} 1 \\ 1 \\ 25 \end{matrix}$ &
                $\begin{matrix} 1 \\ 25 \\ 5 \end{matrix}$ &
                $\begin{matrix} 1 \\ 1 \\ 25 \end{matrix}$ \\ \bottomrule
        \end{tabular}
    \end{subtable}
    \begin{subtable}{0.31\linewidth}
        \centering
        \caption{The payoffs when $w_3$ uses $W5$.}
        \label{tab:pm-P1-W5}
        \scriptsize
        \begin{tabular}{cc@{ }c@{ }c@{ }c@{ }c@{ }c} \toprule
            & \multicolumn{6}{c}{$w_2$} \\ \cmidrule{2-7}
            $w_1$ & $W1$ & $W2$ & $W3$ & $W4$ & $W5$ & $W6$ \\ \toprule
            $W1$ &
                $\begin{matrix} 25 \\ 1 \\ 1 \end{matrix}$ &
                $\begin{matrix} 25 \\ 1 \\ 1 \end{matrix}$ &
                $\begin{matrix} 25 \\ 1 \\ 1 \end{matrix}$ &
                $\begin{matrix} 25 \\ 1 \\ 1 \end{matrix}$ &
                $\begin{matrix} 25 \\ 1 \\ 1 \end{matrix}$ &
                $\begin{matrix} 25 \\ 1 \\ 1 \end{matrix}$ \\ \midrule
            $W2$ &
                $\begin{matrix} 25 \\ 1 \\ 1 \end{matrix}$ &
                $\begin{matrix} 25 \\ 1 \\ 1 \end{matrix}$ &
                $\begin{matrix} 25 \\ 1 \\ 1 \end{matrix}$ &
                $\begin{matrix} 25 \\ 1 \\ 1 \end{matrix}$ &
                $\begin{matrix} 25 \\ 1 \\ 1 \end{matrix}$ &
                $\begin{matrix} 25 \\ 1 \\ 1 \end{matrix}$ \\ \midrule
            $W3$ &
                $\begin{matrix} 5 \\ 25 \\ 1 \end{matrix}$ &
                $\begin{matrix} 5 \\ 25 \\ 1 \end{matrix}$ &
                $\begin{matrix} 5 \\ 25 \\ 1 \end{matrix}$ &
                $\begin{matrix} 5 \\ 25 \\ 1 \end{matrix}$ &
                $\begin{matrix} 5 \\ 25 \\ 1 \end{matrix}$ &
                $\begin{matrix} 5 \\ 25 \\ 1 \end{matrix}$ \\ \midrule
            $W4$ &
                $\begin{matrix} 5 \\ 25 \\ 1 \end{matrix}$ &
                $\begin{matrix} 5 \\ 25 \\ 1 \end{matrix}$ &
                $\begin{matrix} 5 \\ 25 \\ 1 \end{matrix}$ &
                $\begin{matrix} 5 \\ 25 \\ 1 \end{matrix}$ &
                $\begin{matrix} 5 \\ 25 \\ 1 \end{matrix}$ &
                $\begin{matrix} 5 \\ 25 \\ 1 \end{matrix}$ \\ \midrule
            $W5$ &
                $\begin{matrix} 1 \\ 25 \\ 5 \end{matrix}$ &
                $\begin{matrix} 1 \\ 25 \\ 5 \end{matrix}$ &
                $\begin{matrix} 1 \\ 1 \\ 25 \end{matrix}$ &
                $\begin{matrix} 1 \\ 1 \\ 25 \end{matrix}$ &
                $\begin{matrix} 1 \\ 25 \\ 5 \end{matrix}$ &
                $\begin{matrix} 1 \\ 1 \\ 25 \end{matrix}$ \\ \midrule
            $W6$ &
                $\begin{matrix} 1 \\ 25 \\ 5 \end{matrix}$ &
                $\begin{matrix} 1 \\ 25 \\ 5 \end{matrix}$ &
                $\begin{matrix} 1 \\ 1 \\ 25 \end{matrix}$ &
                $\begin{matrix} 1 \\ 1 \\ 25 \end{matrix}$ &
                $\begin{matrix} 1 \\ 25 \\ 5 \end{matrix}$ &
                $\begin{matrix} 1 \\ 1 \\ 25 \end{matrix}$ \\ \bottomrule
        \end{tabular}
    \end{subtable}
    \begin{subtable}{0.31\linewidth}
        \centering
        \caption{The payoffs when $w_3$ uses $W6$.}
        \label{tab:pm-P1-W6}
        \scriptsize
        \begin{tabular}{cc@{ }c@{ }c@{ }c@{ }c@{ }c} \toprule
            & \multicolumn{6}{c}{$w_2$} \\ \cmidrule{2-7}
            $w_1$ & $W1$ & $W2$ & $W3$ & $W4$ & $W5$ & $W6$ \\ \toprule
            $W1$ &
                $\begin{matrix} 25 \\ 1 \\ 1 \end{matrix}$ &
                $\begin{matrix} 25 \\ 1 \\ 1 \end{matrix}$ &
                $\begin{matrix} 25 \\ 1 \\ 1 \end{matrix}$ &
                $\begin{matrix} 25 \\ 1 \\ 1 \end{matrix}$ &
                $\begin{matrix} 25 \\ 1 \\ 1 \end{matrix}$ &
                $\begin{matrix} 25 \\ 1 \\ 1 \end{matrix}$ \\ \midrule
            $W2$ &
                $\begin{matrix} 25 \\ 1 \\ 1 \end{matrix}$ &
                $\begin{matrix} 25 \\ 1 \\ 1 \end{matrix}$ &
                $\begin{matrix} 25 \\ 1 \\ 1 \end{matrix}$ &
                $\begin{matrix} 25 \\ 1 \\ 1 \end{matrix}$ &
                $\begin{matrix} 25 \\ 1 \\ 1 \end{matrix}$ &
                $\begin{matrix} 25 \\ 1 \\ 1 \end{matrix}$ \\ \midrule
            $W3$ &
                $\begin{matrix} 5 \\ 25 \\ 1 \end{matrix}$ &
                $\begin{matrix} 5 \\ 25 \\ 1 \end{matrix}$ &
                $\begin{matrix} 5 \\ 25 \\ 1 \end{matrix}$ &
                $\begin{matrix} 5 \\ 25 \\ 1 \end{matrix}$ &
                $\begin{matrix} 5 \\ 25 \\ 1 \end{matrix}$ &
                $\begin{matrix} 5 \\ 25 \\ 1 \end{matrix}$ \\ \midrule
            $W4$ &
                $\begin{matrix} 5 \\ 25 \\ 1 \end{matrix}$ &
                $\begin{matrix} 5 \\ 25 \\ 1 \end{matrix}$ &
                $\begin{matrix} 5 \\ 25 \\ 1 \end{matrix}$ &
                $\begin{matrix} 5 \\ 25 \\ 1 \end{matrix}$ &
                $\begin{matrix} 5 \\ 25 \\ 1 \end{matrix}$ &
                $\begin{matrix} 5 \\ 25 \\ 1 \end{matrix}$ \\ \midrule
            $W5$ &
                $\begin{matrix} 1 \\ 25 \\ 5 \end{matrix}$ &
                $\begin{matrix} 1 \\ 25 \\ 5 \end{matrix}$ &
                $\begin{matrix} 1 \\ 1 \\ 25 \end{matrix}$ &
                $\begin{matrix} 1 \\ 1 \\ 25 \end{matrix}$ &
                $\begin{matrix} 1 \\ 25 \\ 5 \end{matrix}$ &
                $\begin{matrix} 1 \\ 1 \\ 25 \end{matrix}$ \\ \midrule
            $W6$ &
                $\begin{matrix} 1 \\ 25 \\ 5 \end{matrix}$ &
                $\begin{matrix} 1 \\ 25 \\ 5 \end{matrix}$ &
                $\begin{matrix} 1 \\ 1 \\ 25 \end{matrix}$ &
                $\begin{matrix} 1 \\ 1 \\ 25 \end{matrix}$ &
                $\begin{matrix} 1 \\ 25 \\ 5 \end{matrix}$ &
                $\begin{matrix} 1 \\ 1 \\ 25 \end{matrix}$ \\ \bottomrule
        \end{tabular}
    \end{subtable}
\end{table}
\begin{table}
    \centering
    \caption{The payoff matrices when the preference profile is $P_2$, $I = W$, 
            and the utility function is $\bm{u}_{conv} = (25, 5, 1)$.
            Each subtable shows the payoff matrix when the strategy of $w_3$ is fixed.
            The row and column labels indicate the strategy of $w_1$ and $w_2$, respectively.
            Each entry denotes the payoffs for $w_1$, $w_2$, and $w_3$ from the top.}
    \label{tab:pm-P2}
    \begin{subtable}{0.31\linewidth}
        \centering
        \caption{The payoffs when $w_3$ uses $W1$.}
        \label{tab:pm-P2-W1}
        \scriptsize
        \begin{tabular}{cc@{ }c@{ }c@{ }c@{ }c@{ }c} \toprule
            & \multicolumn{6}{c}{$w_2$} \\ \cmidrule{2-7}
            $w_1$ & $W1$ & $W2$ & $W3$ & $W4$ & $W5$ & $W6$ \\ \toprule
            $W1$ &
                $\begin{matrix} 1 \\ 5 \\ 1 \end{matrix}$ &
                $\begin{matrix} 1 \\ 5 \\ 1 \end{matrix}$ &
                $\begin{matrix} 1 \\ 5 \\ 1 \end{matrix}$ &
                $\begin{matrix} 25 \\ 1 \\ 1 \end{matrix}$ &
                $\begin{matrix} 25 \\ 25 \\ 5 \end{matrix}$ &
                $\begin{matrix} 25 \\ 25 \\ 5 \end{matrix}$ \\ \midrule
            $W2$ &
                $\begin{matrix} 5 \\ 5 \\ 5 \end{matrix}$ &
                $\begin{matrix} 5 \\ 5 \\ 5 \end{matrix}$ &
                $\begin{matrix} 25 \\ 1 \\ 1 \end{matrix}$ &
                $\begin{matrix} 25 \\ 1 \\ 1 \end{matrix}$ &
                $\begin{matrix} 5 \\ 5 \\ 5 \end{matrix}$ &
                $\begin{matrix} 25 \\ 25 \\ 5 \end{matrix}$ \\ \midrule
            $W3$ &
                $\begin{matrix} 1 \\ 5 \\ 1 \end{matrix}$ &
                $\begin{matrix} 1 \\ 5 \\ 1 \end{matrix}$ &
                $\begin{matrix} 1 \\ 5 \\ 1 \end{matrix}$ &
                $\begin{matrix} 1 \\ 25 \\ 25 \end{matrix}$ &
                $\begin{matrix} 1 \\ 25 \\ 25 \end{matrix}$ &
                $\begin{matrix} 1 \\ 25 \\ 25 \end{matrix}$ \\ \midrule
            $W4$ &
                $\begin{matrix} 1 \\ 5 \\ 1 \end{matrix}$ &
                $\begin{matrix} 1 \\ 5 \\ 1 \end{matrix}$ &
                $\begin{matrix} 1 \\ 5 \\ 1 \end{matrix}$ &
                $\begin{matrix} 1 \\ 25 \\ 25 \end{matrix}$ &
                $\begin{matrix} 1 \\ 25 \\ 25 \end{matrix}$ &
                $\begin{matrix} 1 \\ 25 \\ 25 \end{matrix}$ \\ \midrule
            $W5$ &
                $\begin{matrix} 5 \\ 5 \\ 5 \end{matrix}$ &
                $\begin{matrix} 5 \\ 5 \\ 5 \end{matrix}$ &
                $\begin{matrix} 5 \\ 1 \\ 25 \end{matrix}$ &
                $\begin{matrix} 5 \\ 1 \\ 25 \end{matrix}$ &
                $\begin{matrix} 5 \\ 5 \\ 5 \end{matrix}$ &
                $\begin{matrix} 5 \\ 1 \\ 25 \end{matrix}$ \\ \midrule
            $W6$ &
                $\begin{matrix} 5 \\ 5 \\ 5 \end{matrix}$ &
                $\begin{matrix} 5 \\ 5 \\ 5 \end{matrix}$ &
                $\begin{matrix} 5 \\ 1 \\ 25 \end{matrix}$ &
                $\begin{matrix} 5 \\ 1 \\ 25 \end{matrix}$ &
                $\begin{matrix} 5 \\ 5 \\ 5 \end{matrix}$ &
                $\begin{matrix} 5 \\ 1 \\ 25 \end{matrix}$ \\ \bottomrule
        \end{tabular}
    \end{subtable}
    \begin{subtable}{0.31\linewidth}
        \centering
        \caption{The payoffs when $w_3$ uses $W2$.}
        \label{tab:pm-P2-W2}
        \scriptsize
        \begin{tabular}{cc@{ }c@{ }c@{ }c@{ }c@{ }c} \toprule
            & \multicolumn{6}{c}{$w_2$} \\ \cmidrule{2-7}
            $w_1$ & $W1$ & $W2$ & $W3$ & $W4$ & $W5$ & $W6$ \\ \toprule
            $W1$ &
                $\begin{matrix} 1 \\ 5 \\ 1 \end{matrix}$ &
                $\begin{matrix} 1 \\ 5 \\ 1 \end{matrix}$ &
                $\begin{matrix} 1 \\ 5 \\ 1 \end{matrix}$ &
                $\begin{matrix} 25 \\ 1 \\ 1 \end{matrix}$ &
                $\begin{matrix} 25 \\ 25 \\ 5 \end{matrix}$ &
                $\begin{matrix} 25 \\ 25 \\ 5 \end{matrix}$ \\ \midrule
            $W2$ &
                $\begin{matrix} 5 \\ 5 \\ 5 \end{matrix}$ &
                $\begin{matrix} 5 \\ 5 \\ 5 \end{matrix}$ &
                $\begin{matrix} 25 \\ 1 \\ 1 \end{matrix}$ &
                $\begin{matrix} 25 \\ 1 \\ 1 \end{matrix}$ &
                $\begin{matrix} 5 \\ 5 \\ 5 \end{matrix}$ &
                $\begin{matrix} 25 \\ 25 \\ 5 \end{matrix}$ \\ \midrule
            $W3$ &
                $\begin{matrix} 1 \\ 5 \\ 1 \end{matrix}$ &
                $\begin{matrix} 1 \\ 5 \\ 1 \end{matrix}$ &
                $\begin{matrix} 1 \\ 5 \\ 1 \end{matrix}$ &
                $\begin{matrix} 1 \\ 25 \\ 25 \end{matrix}$ &
                $\begin{matrix} 1 \\ 25 \\ 25 \end{matrix}$ &
                $\begin{matrix} 1 \\ 25 \\ 25 \end{matrix}$ \\ \midrule
            $W4$ &
                $\begin{matrix} 1 \\ 5 \\ 1 \end{matrix}$ &
                $\begin{matrix} 1 \\ 5 \\ 1 \end{matrix}$ &
                $\begin{matrix} 1 \\ 5 \\ 1 \end{matrix}$ &
                $\begin{matrix} 1 \\ 25 \\ 25 \end{matrix}$ &
                $\begin{matrix} 1 \\ 25 \\ 25 \end{matrix}$ &
                $\begin{matrix} 1 \\ 25 \\ 25 \end{matrix}$ \\ \midrule
            $W5$ &
                $\begin{matrix} 5 \\ 5 \\ 5 \end{matrix}$ &
                $\begin{matrix} 5 \\ 5 \\ 5 \end{matrix}$ &
                $\begin{matrix} 5 \\ 1 \\ 25 \end{matrix}$ &
                $\begin{matrix} 5 \\ 1 \\ 25 \end{matrix}$ &
                $\begin{matrix} 5 \\ 5 \\ 5 \end{matrix}$ &
                $\begin{matrix} 5 \\ 1 \\ 25 \end{matrix}$ \\ \midrule
            $W6$ &
                $\begin{matrix} 5 \\ 5 \\ 5 \end{matrix}$ &
                $\begin{matrix} 5 \\ 5 \\ 5 \end{matrix}$ &
                $\begin{matrix} 5 \\ 1 \\ 25 \end{matrix}$ &
                $\begin{matrix} 5 \\ 1 \\ 25 \end{matrix}$ &
                $\begin{matrix} 5 \\ 5 \\ 5 \end{matrix}$ &
                $\begin{matrix} 5 \\ 1 \\ 25 \end{matrix}$ \\ \bottomrule
        \end{tabular}
    \end{subtable}
    \begin{subtable}{0.31\linewidth}
        \centering
        \caption{The payoffs when $w_3$ uses $W3$.}
        \label{tab:pm-P2-W3}
        \scriptsize
        \begin{tabular}{cc@{ }c@{ }c@{ }c@{ }c@{ }c} \toprule
            & \multicolumn{6}{c}{$w_2$} \\ \cmidrule{2-7}
            $w_1$ & $W1$ & $W2$ & $W3$ & $W4$ & $W5$ & $W6$ \\ \toprule
            $W1$ &
                $\begin{matrix} 1 \\ 5 \\ 1 \end{matrix}$ &
                $\begin{matrix} 1 \\ 5 \\ 1 \end{matrix}$ &
                $\begin{matrix} 1 \\ 5 \\ 1 \end{matrix}$ &
                $\begin{matrix} 25 \\ 1 \\ 1 \end{matrix}$ &
                $\begin{matrix} 25 \\ 25 \\ 5 \end{matrix}$ &
                $\begin{matrix} 25 \\ 25 \\ 5 \end{matrix}$ \\ \midrule
            $W2$ &
                $\begin{matrix} 5 \\ 5 \\ 5 \end{matrix}$ &
                $\begin{matrix} 5 \\ 5 \\ 5 \end{matrix}$ &
                $\begin{matrix} 25 \\ 1 \\ 1 \end{matrix}$ &
                $\begin{matrix} 25 \\ 1 \\ 1 \end{matrix}$ &
                $\begin{matrix} 5 \\ 5 \\ 5 \end{matrix}$ &
                $\begin{matrix} 25 \\ 25 \\ 5 \end{matrix}$ \\ \midrule
            $W3$ &
                $\begin{matrix} 1 \\ 5 \\ 1 \end{matrix}$ &
                $\begin{matrix} 1 \\ 5 \\ 1 \end{matrix}$ &
                $\begin{matrix} 1 \\ 5 \\ 1 \end{matrix}$ &
                $\begin{matrix} 1 \\ 25 \\ 25 \end{matrix}$ &
                $\begin{matrix} 1 \\ 25 \\ 25 \end{matrix}$ &
                $\begin{matrix} 1 \\ 25 \\ 25 \end{matrix}$ \\ \midrule
            $W4$ &
                $\begin{matrix} 1 \\ 5 \\ 1 \end{matrix}$ &
                $\begin{matrix} 1 \\ 5 \\ 1 \end{matrix}$ &
                $\begin{matrix} 1 \\ 5 \\ 1 \end{matrix}$ &
                $\begin{matrix} 1 \\ 25 \\ 25 \end{matrix}$ &
                $\begin{matrix} 1 \\ 25 \\ 25 \end{matrix}$ &
                $\begin{matrix} 1 \\ 25 \\ 25 \end{matrix}$ \\ \midrule
            $W5$ &
                $\begin{matrix} 5 \\ 5 \\ 5 \end{matrix}$ &
                $\begin{matrix} 5 \\ 5 \\ 5 \end{matrix}$ &
                $\begin{matrix} 5 \\ 1 \\ 25 \end{matrix}$ &
                $\begin{matrix} 5 \\ 1 \\ 25 \end{matrix}$ &
                $\begin{matrix} 5 \\ 5 \\ 5 \end{matrix}$ &
                $\begin{matrix} 5 \\ 1 \\ 25 \end{matrix}$ \\ \midrule
            $W6$ &
                $\begin{matrix} 5 \\ 5 \\ 5 \end{matrix}$ &
                $\begin{matrix} 5 \\ 5 \\ 5 \end{matrix}$ &
                $\begin{matrix} 5 \\ 1 \\ 25 \end{matrix}$ &
                $\begin{matrix} 5 \\ 1 \\ 25 \end{matrix}$ &
                $\begin{matrix} 5 \\ 5 \\ 5 \end{matrix}$ &
                $\begin{matrix} 5 \\ 1 \\ 25 \end{matrix}$ \\ \bottomrule
        \end{tabular}
    \end{subtable}
    \bigskip

    \begin{subtable}{0.31\linewidth}
        \centering
        \caption{The payoffs when $w_3$ uses $W4$.}
        \label{tab:pm-P2-W4}
        \scriptsize
        \begin{tabular}{cc@{ }c@{ }c@{ }c@{ }c@{ }c} \toprule
            & \multicolumn{6}{c}{$w_2$} \\ \cmidrule{2-7}
            $w_1$ & $W1$ & $W2$ & $W3$ & $W4$ & $W5$ & $W6$ \\ \toprule
            $W1$ &
                $\begin{matrix} 1 \\ 5 \\ 1 \end{matrix}$ &
                $\begin{matrix} 1 \\ 5 \\ 1 \end{matrix}$ &
                $\begin{matrix} 1 \\ 5 \\ 1 \end{matrix}$ &
                $\begin{matrix} 25 \\ 1 \\ 1 \end{matrix}$ &
                $\begin{matrix} 25 \\ 25 \\ 5 \end{matrix}$ &
                $\begin{matrix} 25 \\ 25 \\ 5 \end{matrix}$ \\ \midrule
            $W2$ &
                $\begin{matrix} 5 \\ 5 \\ 5 \end{matrix}$ &
                $\begin{matrix} 5 \\ 5 \\ 5 \end{matrix}$ &
                $\begin{matrix} 25 \\ 1 \\ 1 \end{matrix}$ &
                $\begin{matrix} 25 \\ 1 \\ 1 \end{matrix}$ &
                $\begin{matrix} 5 \\ 5 \\ 5 \end{matrix}$ &
                $\begin{matrix} 25 \\ 25 \\ 5 \end{matrix}$ \\ \midrule
            $W3$ &
                $\begin{matrix} 1 \\ 5 \\ 1 \end{matrix}$ &
                $\begin{matrix} 1 \\ 5 \\ 1 \end{matrix}$ &
                $\begin{matrix} 1 \\ 5 \\ 1 \end{matrix}$ &
                $\begin{matrix} 1 \\ 25 \\ 25 \end{matrix}$ &
                $\begin{matrix} 1 \\ 25 \\ 25 \end{matrix}$ &
                $\begin{matrix} 1 \\ 25 \\ 25 \end{matrix}$ \\ \midrule
            $W4$ &
                $\begin{matrix} 1 \\ 5 \\ 1 \end{matrix}$ &
                $\begin{matrix} 1 \\ 5 \\ 1 \end{matrix}$ &
                $\begin{matrix} 1 \\ 5 \\ 1 \end{matrix}$ &
                $\begin{matrix} 1 \\ 25 \\ 25 \end{matrix}$ &
                $\begin{matrix} 1 \\ 25 \\ 25 \end{matrix}$ &
                $\begin{matrix} 1 \\ 25 \\ 25 \end{matrix}$ \\ \midrule
            $W5$ &
                $\begin{matrix} 5 \\ 5 \\ 5 \end{matrix}$ &
                $\begin{matrix} 5 \\ 5 \\ 5 \end{matrix}$ &
                $\begin{matrix} 5 \\ 1 \\ 25 \end{matrix}$ &
                $\begin{matrix} 5 \\ 1 \\ 25 \end{matrix}$ &
                $\begin{matrix} 5 \\ 5 \\ 5 \end{matrix}$ &
                $\begin{matrix} 5 \\ 1 \\ 25 \end{matrix}$ \\ \midrule
            $W6$ &
                $\begin{matrix} 5 \\ 5 \\ 5 \end{matrix}$ &
                $\begin{matrix} 5 \\ 5 \\ 5 \end{matrix}$ &
                $\begin{matrix} 5 \\ 1 \\ 25 \end{matrix}$ &
                $\begin{matrix} 5 \\ 1 \\ 25 \end{matrix}$ &
                $\begin{matrix} 5 \\ 5 \\ 5 \end{matrix}$ &
                $\begin{matrix} 5 \\ 1 \\ 25 \end{matrix}$ \\ \bottomrule
        \end{tabular}
    \end{subtable}
    \begin{subtable}{0.31\linewidth}
        \centering
        \caption{The payoffs when $w_3$ uses $W5$.}
        \label{tab:pm-P2-W5}
        \scriptsize
        \begin{tabular}{cc@{ }c@{ }c@{ }c@{ }c@{ }c} \toprule
            & \multicolumn{6}{c}{$w_2$} \\ \cmidrule{2-7}
            $w_1$ & $W1$ & $W2$ & $W3$ & $W4$ & $W5$ & $W6$ \\ \toprule
            $W1$ &
                $\begin{matrix} 1 \\ 5 \\ 1 \end{matrix}$ &
                $\begin{matrix} 1 \\ 5 \\ 1 \end{matrix}$ &
                $\begin{matrix} 1 \\ 5 \\ 1 \end{matrix}$ &
                $\begin{matrix} 25 \\ 1 \\ 1 \end{matrix}$ &
                $\begin{matrix} 25 \\ 25 \\ 5 \end{matrix}$ &
                $\begin{matrix} 25 \\ 25 \\ 5 \end{matrix}$ \\ \midrule
            $W2$ &
                $\begin{matrix} 5 \\ 5 \\ 5 \end{matrix}$ &
                $\begin{matrix} 5 \\ 5 \\ 5 \end{matrix}$ &
                $\begin{matrix} 25 \\ 1 \\ 1 \end{matrix}$ &
                $\begin{matrix} 25 \\ 1 \\ 1 \end{matrix}$ &
                $\begin{matrix} 5 \\ 5 \\ 5 \end{matrix}$ &
                $\begin{matrix} 25 \\ 25 \\ 5 \end{matrix}$ \\ \midrule
            $W3$ &
                $\begin{matrix} 1 \\ 5 \\ 1 \end{matrix}$ &
                $\begin{matrix} 1 \\ 5 \\ 1 \end{matrix}$ &
                $\begin{matrix} 1 \\ 5 \\ 1 \end{matrix}$ &
                $\begin{matrix} 1 \\ 25 \\ 25 \end{matrix}$ &
                $\begin{matrix} 1 \\ 25 \\ 25 \end{matrix}$ &
                $\begin{matrix} 1 \\ 25 \\ 25 \end{matrix}$ \\ \midrule
            $W4$ &
                $\begin{matrix} 1 \\ 5 \\ 1 \end{matrix}$ &
                $\begin{matrix} 1 \\ 5 \\ 1 \end{matrix}$ &
                $\begin{matrix} 1 \\ 5 \\ 1 \end{matrix}$ &
                $\begin{matrix} 1 \\ 25 \\ 25 \end{matrix}$ &
                $\begin{matrix} 1 \\ 25 \\ 25 \end{matrix}$ &
                $\begin{matrix} 1 \\ 25 \\ 25 \end{matrix}$ \\ \midrule
            $W5$ &
                $\begin{matrix} 5 \\ 5 \\ 5 \end{matrix}$ &
                $\begin{matrix} 5 \\ 5 \\ 5 \end{matrix}$ &
                $\begin{matrix} 5 \\ 1 \\ 25 \end{matrix}$ &
                $\begin{matrix} 5 \\ 1 \\ 25 \end{matrix}$ &
                $\begin{matrix} 5 \\ 5 \\ 5 \end{matrix}$ &
                $\begin{matrix} 5 \\ 1 \\ 25 \end{matrix}$ \\ \midrule
            $W6$ &
                $\begin{matrix} 5 \\ 5 \\ 5 \end{matrix}$ &
                $\begin{matrix} 5 \\ 5 \\ 5 \end{matrix}$ &
                $\begin{matrix} 5 \\ 1 \\ 25 \end{matrix}$ &
                $\begin{matrix} 5 \\ 1 \\ 25 \end{matrix}$ &
                $\begin{matrix} 5 \\ 5 \\ 5 \end{matrix}$ &
                $\begin{matrix} 5 \\ 1 \\ 25 \end{matrix}$ \\ \bottomrule
        \end{tabular}
    \end{subtable}
    \begin{subtable}{0.31\linewidth}
        \centering
        \caption{The payoffs when $w_3$ uses $W6$.}
        \label{tab:pm-P2-W6}
        \scriptsize
        \begin{tabular}{cc@{ }c@{ }c@{ }c@{ }c@{ }c} \toprule
            & \multicolumn{6}{c}{$w_2$} \\ \cmidrule{2-7}
            $w_1$ & $W1$ & $W2$ & $W3$ & $W4$ & $W5$ & $W6$ \\ \toprule
            $W1$ &
                $\begin{matrix} 1 \\ 5 \\ 1 \end{matrix}$ &
                $\begin{matrix} 1 \\ 5 \\ 1 \end{matrix}$ &
                $\begin{matrix} 1 \\ 5 \\ 1 \end{matrix}$ &
                $\begin{matrix} 25 \\ 1 \\ 1 \end{matrix}$ &
                $\begin{matrix} 25 \\ 25 \\ 5 \end{matrix}$ &
                $\begin{matrix} 25 \\ 25 \\ 5 \end{matrix}$ \\ \midrule
            $W2$ &
                $\begin{matrix} 5 \\ 5 \\ 5 \end{matrix}$ &
                $\begin{matrix} 5 \\ 5 \\ 5 \end{matrix}$ &
                $\begin{matrix} 25 \\ 1 \\ 1 \end{matrix}$ &
                $\begin{matrix} 25 \\ 1 \\ 1 \end{matrix}$ &
                $\begin{matrix} 5 \\ 5 \\ 5 \end{matrix}$ &
                $\begin{matrix} 25 \\ 25 \\ 5 \end{matrix}$ \\ \midrule
            $W3$ &
                $\begin{matrix} 1 \\ 5 \\ 1 \end{matrix}$ &
                $\begin{matrix} 1 \\ 5 \\ 1 \end{matrix}$ &
                $\begin{matrix} 1 \\ 5 \\ 1 \end{matrix}$ &
                $\begin{matrix} 1 \\ 25 \\ 25 \end{matrix}$ &
                $\begin{matrix} 1 \\ 25 \\ 25 \end{matrix}$ &
                $\begin{matrix} 1 \\ 25 \\ 25 \end{matrix}$ \\ \midrule
            $W4$ &
                $\begin{matrix} 1 \\ 5 \\ 1 \end{matrix}$ &
                $\begin{matrix} 1 \\ 5 \\ 1 \end{matrix}$ &
                $\begin{matrix} 1 \\ 5 \\ 1 \end{matrix}$ &
                $\begin{matrix} 1 \\ 25 \\ 25 \end{matrix}$ &
                $\begin{matrix} 1 \\ 25 \\ 25 \end{matrix}$ &
                $\begin{matrix} 1 \\ 25 \\ 25 \end{matrix}$ \\ \midrule
            $W5$ &
                $\begin{matrix} 5 \\ 5 \\ 5 \end{matrix}$ &
                $\begin{matrix} 5 \\ 5 \\ 5 \end{matrix}$ &
                $\begin{matrix} 5 \\ 1 \\ 25 \end{matrix}$ &
                $\begin{matrix} 5 \\ 1 \\ 25 \end{matrix}$ &
                $\begin{matrix} 5 \\ 5 \\ 5 \end{matrix}$ &
                $\begin{matrix} 5 \\ 1 \\ 25 \end{matrix}$ \\ \midrule
            $W6$ &
                $\begin{matrix} 5 \\ 5 \\ 5 \end{matrix}$ &
                $\begin{matrix} 5 \\ 5 \\ 5 \end{matrix}$ &
                $\begin{matrix} 5 \\ 1 \\ 25 \end{matrix}$ &
                $\begin{matrix} 5 \\ 1 \\ 25 \end{matrix}$ &
                $\begin{matrix} 5 \\ 5 \\ 5 \end{matrix}$ &
                $\begin{matrix} 5 \\ 1 \\ 25 \end{matrix}$ \\ \bottomrule
        \end{tabular}
    \end{subtable}
\end{table}

The payoff matrices of the preference revelation games analyzed in section \ref{sec:sim} are presented in tables \ref{tab:pm-P1} and \ref{tab:pm-P2}.
The utility function is set to $\bm{u}_{\tconv} = (25, 10, 1)$.
We assume that men always report truthfully: i.e. $I = W$.

\section{Proofs for section \ref{sec:sim}}
\subsection{Closure under weakly better replies of $H(\mu^M_1)$} \label{ssec:proof-p1}
We show that $H(\mu^M_1)$ of equation \eqref{eq:H(muM1)} is cuwbr.
\begin{proof}
    \newcommand{\tin}{\text{in}}
    \newcommand{\tout}{\text{out}}
    First, recall that $H = \prod_{i \in I} H^i$ is cuwbr if $\alpha^i(H) \subset H^i$ for all $i \in I$, 
    where
    \begin{equation}
        \alpha^i(H) = \{ h \in S^i \mid \exists \, x \in \Theta(H) \ \text{s.t.} \ u^i(h, x^{-i}) \geq u^i(x) \}.
    \end{equation}
    We consider the following sets:
    \begin{align}
        \alpha^i_{\tin}(H) &= \{ h \in H^i \mid \exists \, x \in \Theta(H) \ \text{s.t.} \ u^i(h, x^{-i}) \geq u^i(x) \} \\
        \alpha^i_{\tout}(H) &= \{ h \in S^i \setminus H^i \mid \exists \, x \in \Theta(H) \ \text{s.t.} \ u^i(h, x^{-i}) \geq u^i(x) \}.
    \end{align}
    One can see that $\alpha^i(H) = \alpha^i_{\tin}(H) \cup \alpha^i_{\tout}(H)$,
    $\alpha^i_{\tin}(H) \subset H^i$, and $\alpha^i_{\tout}(H) \not\subset H^i$,
    which indicate 
    \begin{equation}
        \alpha^i(H) \subset H^i \iff \alpha^i_{\tout}(H) = \emptyset .
    \end{equation}
    Therefore it suffices to examine if $u^i(h, x^{-i}) \geq u^i(x)$ holds for $h \in S^i \setminus H^i$.

    Table \ref{tab:pm-P1} tells us that the reports of $w_2$ and $w_3$ do not change the outcome when $w_1$ reports $W1$ or $W2$.
    We can therefore assume that $w_2$ and $w_3$ report $W1$ without loss of generality, 
    and what we need to show is that $H = \{W1, W2\} \times \{W1\} \times \{W1\}$ is cuwbr.

    Let $x \in \Theta(H)$ be $x = \qty((p_1, 1 - p_1), (1), (1))$.
    Each element in $x$ denotes the probability distribution describing the mixed strategy of each player,
    and $p_1$ is the probability that $w_1$ reports $W1$.
    When all players use $x$, payoffs are $u^{w_1} = 25$ and $u^{w_2} = u^{w_3} = 1$.
    From table \ref{tab:pm-P1-W1}, following inequalities hold:
    \begin{equation}
        u^{w_1}(3, x^{-w_1}) = u^{w_1}(4, x^{-w_1}) = 5 < u^{w_1}(x), \quad
        u^{w_1}(5, x^{-w_1}) = u^{w_1}(6, x^{-w_1}) = 1 < u^{w_1}(x).
    \end{equation}
    Thus $\alpha^{w_1}(H) \subset H^{w_1}$ holds, completing the proof.
\end{proof}

\subsection{Nonclosure under weakly better replies of $H_1(\mu^W_2)$ and $H_2(\mu^W_2)$} \label{ssec:proof-p2}
First, we show that $H_1(\mu^W_2)$ of equation \eqref{eq:H1(muW2)} is not cuwbr.
\begin{proof}
    As in the preceding subsections, it suffices to examine if $u^i(h, x^{-i}) \geq u^i(x)$ holds for $h \in S^i \setminus H^i$.
    Table \ref{tab:pm-P2} tells us that the report of $w_3$ does not affect the outcome.
    We therefore assume that $w_3$ submits $W1$ without loss of generality,
    and what we need to show is that $H = \{W1, W2\} \times \{W6\} \times \{W1\}$ is not cuwbr.

    Let $x \in \Theta(H)$ be $x = \qty((p_1, 1 - p_1), (1), (1))$.
    Each element in $x$ denotes the probability distribution describing the mixed strategy of each player,
    and $p_1$ is the probability that $w_1$ reports $W1$.
    When all players use $x$, payoffs are $u^{w_1} = u^{w_2} = 25$, $u^{w_3} = 5$.
    From table \ref{tab:pm-P2-W1}, following expressions hold:
    \begin{equation}
        \begin{alignedat}{6}
            u^{w_1}(2, x^{-w_1}) = &\ u^{w_1}(3, x^{-w_1}) &&= 1 &&< u^{w_1}(x), & \quad
            u^{w_1}(4, x^{-w_1}) = &\ u^{w_1}(5, x^{-w_1}) &&= 5 &&< u^{w_1}(x) \\
            u^{w_2}(0, x^{-w_2}) = &\ u^{w_2}(1, x^{-w_2}) &&= 5 &&< u^{w_2}(x), &
            u^{w_2}(2, x^{-w_2}) = &\ 1 + 4 p_1 &&\leq 5 &&< u^{w_2}(x) \\
            &\ u^{w_2}(3, x^{-w_2}) &&= 1 &&< u^{w_2}(x), &
            u^{w_2}(4, x^{-w_2}) = &\ 5 + 20 p_1 &&\leq 25 &&= u^{w_2}(x).
        \end{alignedat}
    \end{equation}
    Therefore
    \begin{equation} \tag{\ref{eq:cuwbr-H1}}
        \begin{aligned}
            \alpha^{w_1}(H) &\subset H^{w_1} \\
            \alpha^{w_2}(H) &\subset H^{w_2} \cup \{W5\} \not\subset H^{w_2},
        \end{aligned}
    \end{equation}
    and $H$ is not cuwbr.
    \end{proof}

Next, we show that $H_2(\mu^W_2)$ of equation \eqref{eq:H2(muW2)} is not cuwbr either.
\begin{proof}
    In the same way as above, we examine if $u^i(h, x^{-i}) \geq u^i(x)$ holds for $h \in S^i \setminus H^i$.
    Table \ref{tab:pm-P2} tells us that the report of $w_3$ does not affect the outcome.
    We therefore assume that $w_3$ submits $W1$ without loss of generality,
    and what we need to show is that $H = \{W1\} \times \{W5, W6\} \times \{W1\}$ is not cuwbr.

    Let $x \in \Theta(H)$ be $x = \qty((1), (q_5, 1 - q_5), (1))$.
    Each element in $x$ denotes the probability distribution describing the mixed strategy of each player,
    and $q_5$ is the probability that $w_2$ reports $W5$.
    When all players use $x$, payoffs are $u^{w_1} = u^{w_2} = 25$, $u^{w_3} = 5$.
    From table \ref{tab:pm-P2}, following expressions hold:
    \begin{equation}
        \begin{alignedat}{4}
            u^{w_1}(1, x^{-w_1}) &= 25 - 20 q_5 && &&\leq 25 &&= u^{w_1}(x) \\
            u^{w_1}(2, x^{-w_1}) &= u^{w_1}(3, x^{-w_1}) && &&= 1 &&< u^{w_1}(x) \\
            u^{w_1}(4, x^{-w_1}) &= u^{w_1}(5, x^{-w_1}) && &&= 5 &&< u^{w_1}(x) \\
            u^{w_2}(0, x^{-w_2}) &= u^{w_2}(1, x^{-w_2}) &&= u^{w_2}(2, x^{-w_2}) &&= 5 &&< u^{w_2}(x) \\
            u^{w_2}(3, x^{-w_2}) &= 1 && && &&< u^{w_2}(x)
        \end{alignedat}
    \end{equation}
    Therefore
    \begin{equation} \tag{\ref{eq:cuwbr-H2}}
        \begin{aligned}
            \alpha^{w_1}(H) &\subset H^{w_1} \cup \{W2\} \not\subset H^{w_1} \\
            \alpha^{w_2}(H) &\subset H^{w_2},
        \end{aligned}
    \end{equation}
    and $H$ is not cuwbr.
\end{proof}

\subsection{Nonexistence of a partial preimage of $\mu^W_2$ closed under weakly better replies} \label{ssec:proof-p2-nocuwbr}
We have already shown that $H_1(\mu^W_2)$ and $H_2(\mu^W_2)$ are not cuwbr.
We prove that its other partial preimages are not cuwbr either to show that there is no partial preimage of $\mu^W_2$ that is cuwbr.
\begin{proof}
    In the $6^6$ possible report profiles for women, eighteen result in $\mu^W_2$,
    all of which are in $H_1(\mu^W_2) \cup H_2(\mu^W_2)$.
    Thus their subsets are the only candidates for a partial preimage that is cuwbr.
    Let us consider the partial preimage $H_3(\mu^W_2) = \{W1\} \times \{W6\} \times \{W1, \dots, W6\}$.
    Since the report of $w_3$ does not change the outcome, 
    we may assume that $w_3$ reports $W1$ without loss of generality.
    Then the partial preimage to examine is $H = \{W1\} \times \{W6\} \times \{W1\}$, which is a pure strategy profile.
    Since the cuwbr of $H$ is equivalent to that $H$ is a strict Nash equilibrium, and
    there is no strategy profile that is a strict Nash equilibrium in the current setting as we mentioned in section \ref{sec:sim},
    $H$ is not cuwbr.
    The same argument holds for all other partial preimages of $\mu^W_2$, and thus 
    there is no partial preimage of $\mu^W_2$ that is cuwbr.
\end{proof}

\section{Proof for section \ref{sec:disc}} \label{sec:proof-disc}
We show that a trajectory starting from $x' \in \Theta(H')$ ends up at some $x'' \in \Theta(H_1 \cup H_2)$ under payoff-positive selection dynamics.

\begin{proof}
First, observe that $H'$ is not a partial preimage of $\mu^W_2$, since its subset
\begin{equation}
    H_3 := \{W2\} \times \{W5\} \times \{W1, \dots, W6\}
\end{equation}
is a partial preimage of $\mu^M_2$.
Because $H' \setminus \qty(H_1 \cup H_2) = H_3$, trajectories starting in $\Theta(H')$ go into either $\Theta(H_1 \cup H_2)$ or $\Theta(H_3)$.
Now, we look at dynamics.
From the payoff matrices in Table \ref{tab:pm-P2}, one finds that the report of $w_3$ does not change the outcome.
Therefore we can assume $w_3$ always reports $W1$ without loss of generality.
When the state $x'$ is in $\Theta(H')$, $x^{w_1}_h = 0$ for $h \in \qty{W3, W4, W5, W6}$ and $x^{w_2}_h = 0$ for $h \in \qty{W1, W2, W3, W4}$.
Using $x^{w_1}_{W1} + x^{w_1}_{W2} = x^{w_2}_{W5} + x^{w_2}_{W6} = 1$, dynamics of $x'$ become two-dimensional dynamical systems:
\begin{equation}
    \begin{cases}
        \dot{x}^{w_1}_{W1} = g^{w_1}_{W1}(x) \, x^{w_1}_{W1} \\
        \dot{x}^{w_2}_{W6} = g^{w_2}_{W6}(x) \, x^{w_2}_{W6}.
    \end{cases}
\end{equation}
Recalling the definition of the payoff-positivity and looking at the payoff matrix in Table \ref{tab:pm-P2-W1}, we find
\begin{align}
    g^{w_1}_{W1}(x) \lessgtr 0 
    &\iff u^{w_1}(W1, x^{-w_1}) - u^{w_1}(x) \lessgtr 0 
    \iff 20 \qty(1 - x^{w_1}_{W1}) \qty(1 - x^{w_2}_{W6}) \lessgtr 0 \\
    g^{w_2}_{W6}(x) \lessgtr 0 
    &\iff u^{w_2}(W6, x^{-w_1}) - u^{w_2}(x) \lessgtr 0 
    \iff 20 \qty(1 - x^{w_1}_{W1}) \qty(1 - x^{w_2}_{W6}) \lessgtr 0.
\end{align}
Hence, $\dot{x}^{w_1}_{W1} > 0$ and $\dot{x}^{w_2}_{W6} > 0$ until either $x^{w_1}_{W1}$ or $x^{w_2}_{W6}$ gets to be one.
As $x \in \Theta(H_1)$ when $x^{w_2}_{W6} = 1$, and $x \in \Theta(H_2)$ when $x^{w_1}_{W1} = 1$,
we have established that trajectories starting in $\Theta(H')$ go into either $\Theta(H_1)$ or $\Theta(H_2)$ under payoff-positive selection dynamics.
\end{proof}

\end{document}